\documentclass[letterpaper, 10pt, conference]{ieeeconf}

\IEEEoverridecommandlockouts
\overrideIEEEmargins

\usepackage{amsmath,amssymb,pstricks,color,dsfont}
\usepackage{graphicx}
\usepackage{psfrag,graphicx,epsfig}
\usepackage{xspace}
\usepackage{subfig}
\usepackage{float}
\usepackage{placeins}
\usepackage{multirow}
\usepackage{pgf,tikz}

\usepackage[vlined,linesnumbered,boxruled]{algorithm2e}

\usepackage[bookmarks=true]{hyperref}
\hypersetup
{
colorlinks=true,
linkcolor=red, citecolor=blue, filecolor=magenta, urlcolor=blue
}
\usepackage{url, cite}

\usepackage{array}

\newcolumntype{x}[1]{%
>{\centering\hspace{0pt}}p{#1}}%


\setlength{\floatsep}{0.1in}
\setlength{\textfloatsep}{0in}
\setlength{\intextsep}{0.1in}
\setlength{\belowcaptionskip}{0.05in}
\setlength{\abovecaptionskip}{0.1in}
\setlength{\abovedisplayskip}{0.05in}
\setlength{\belowdisplayskip}{0.05in}

\newcommand{\EQ}{\begin{eqnarray}}
\newcommand{\EN}{\end{eqnarray}}
\newcommand{\EQQ}{\begin{eqnarray*}}
\newcommand{\ENN}{\end{eqnarray*}}

\newcommand{\bremark}{\begin{remark} \begin{rm} }
\newcommand{\eremark}{ \end{rm} \rule{1mm}{2mm}
\end{remark} }
\newcommand{\btheorem}{\begin{theorem} \begin{rm} }
\newcommand{\etheorem}{ \end{rm} \rule{1mm}{2mm}
\end{theorem} }
\newcommand{\blemma}{\begin{lemma} \begin{rm} }
\newcommand{\elemma}{ \end{rm} \rule{1mm}{2mm}
\end{lemma} }
\newcommand{\bcorollary}{\begin{corollary} \begin{rm} }
\newcommand{\ecorollary}{ \end{rm} \rule{1mm}{2mm}
\end{corollary} }
\newcommand{\bdefinition}{\begin{definition}\begin{rm} }
\newcommand{\edefinition}{ \end{rm} \rule{1mm}{2mm}
\end{definition} }
\newcommand{\bproposition}{\begin{proposition} \begin{rm} }
\newcommand{\eproposition}{ \end{rm} \rule{1mm}{2mm}
\end{proposition} }
\newcommand{\bexample}{\begin{example} \begin{rm} }
\newcommand{\eexample}{ \end{rm} \rule{1mm}{2mm}
\end{example} }
\newcommand{\basm}{\begin{assumption} \begin{rm}}
\newcommand{\easm}{\end{rm} 
\end{assumption}}

\newcommand{\real}{\mathds{R}}



\newtheorem{theorem}{\bf Theorem}[section]
\newtheorem{lemma}{\bf Lemma}[section]
\newtheorem{definition}{\bf Definition}[section]
\newtheorem{remark}{\bf Remark}[section]
\newtheorem{corollary}{\bf Corollary}[section]
\newtheorem{proposition}{\bf Proposition}[section]
\newtheorem{example}{\bf Example}[section]
\newtheorem{assumption}{\bf Assumption}[section]

\newcommand\oprocendsymbol{\hbox{$\bullet$}}
\newcommand\oprocend{\relax\ifmmode\else\unskip\hfill\fi\oprocendsymbol}


\setlength{\marginparwidth}{0.5in}

\definecolor{darkgreen}{rgb}{0,0.6,0}
\definecolor{brown}{rgb}{0.65,0.16,0.16}

\date{}

\providecommand{\IncMargin}[1]{}

\begin{document}

\title{Game theoretic controller synthesis for multi-robot motion planning\\
Part I : Trajectory based algorithms}

\author{Minghui Zhu, Michael Otte, Pratik Chaudhari, Emilio Frazzoli \thanks{M. Zhu is with the Department of Electrical Engineering, Pennsylvania State University, 201 Old Main, University Park, PA, 16802.\newline Email:~\href{mailto:muz16@psu.edu}{muz16@psu.edu}.\newline M. Otte, P. Chaudhari and E. Frazzoli are with the Laboratory for Information and Decision Systems, Massachusetts Institute of Technology, 77 Massachusetts Avenue, Cambridge MA, 02139.\newline Email:~\mbox{\href{mailto:ottemw@mit.edu}{ottemw@mit.edu}}, \href{mailto:pratikac@mit.edu}{pratikac@mit.edu}, \href{mailto:frazzoli@mit.edu}{frazzoli@mit.edu}.\newline This research was supported in part by ONR Grant \#N00014-09-1-0751 and the Michigan/AFRL Collaborative Center on Control Sciences, AFOSR grant \#FA 8650-07-2-3744.}}

\maketitle

\begin{abstract} We consider a class of multi-robot motion planning problems where each robot is associated with multiple objectives and decoupled task specifications. The problems are formulated as an open-loop non-cooperative differential game. A distributed anytime algorithm is proposed to compute a Nash equilibrium of the game. The following properties are proven: (i) the algorithm asymptotically converges to the set of Nash equilibrium; (ii) for scalar cost functionals, the price of stability equals one; (iii) for the worst case, the computational complexity and communication cost are linear in the robot number.
\end{abstract}

\section{Introduction}

Robotic motion planning is a fundamental problem where a control sequence is found to steer a robot from the initial state to the goal set, while enforcing the environmental rules. It is well-known that the problem is computationally challenging~\cite{Reif:79}. The situation is even worse for multi-robot motion planning since the computational complexity exponentially grows as the robot number.


For multi-robot motion planning, non-cooperative game theoretic controller synthesis is interesting in two aspects: \emph{descriptive} and \emph{perspective}. From the descriptive point of view, Nash equilibrium is desirable in inherently competitive scenarios. More specifically, Nash equilibrium characterizes the stable scenarios among inherently self-interested players where none can benefit from unilateral deviations. From the perspective point of view, non-cooperative game theoretic learning holds the promise of providing computationally efficient algorithms for multi-robot controllers where the robots are assumed to be self-interested. Although Nash equilibrium may not be socially optimal, game theoretic approaches remain useful when the computational efficiency dominates.


There have been limited results on rigorous analysis of game theoretic controller synthesis for multi-robot motion planning. The paper~\cite{LaValle.Hutchinson:98} tackles multi-robot motion planning in the framework of feedback differential games. However, it lacks of the rigorous analysis of the algorithm's convergence and computational complexity. In addition, static game theory has been used to synthesize distributed control schemes to steer multiple vehicles to stationary and meaningful configurations; e.g., in~\cite{Zhu.Martinez:12} for optimal sensor deployment, in~\cite{Arsie.Savla.ea:TAC09} for vehicle routing and in~\cite{GA-JRM-JSS:07a} for target assignment.



\emph{Contributions.} This paper presents the first distributed, anytime algorithm to compute open-loop Nash equilibrium for non-cooperative robots. More specifically, we consider a class of multi-robot motion planning problems where each robot is associated with multiple objectives and decoupled task specifications. The problems are formulated as an open-loop non-cooperative differential game. By leveraging the RRG algorithm in~\cite{SK-EF:11}, iterative better response and model checking, a distributed anytime computation algorithm, namely the iNash-trajectory algorithm, is proposed to find a Nash equilibrium of the game. We formally analyze the algorithm convergence, the price of stability as well as the computational complexity and communication cost. The algorithm performance is demonstrated by a number of numerical simulations. 




\emph{Literature review.} Sampling based algorithms have been demonstrated to be efficient in addressing robotic motion planning in high-dimension spaces. The Rapidly-exploring Random Tree (RRT) algorithm and its variants; e.g., in~\cite{SML-JJK:00,SML-JJK:01}, are able to find a feasible path quickly. Recently, two novel algorithms, PRM$^*$ and
RRT$^*$, have been developed in~\cite{SK-EF:11}, and shown to be computationally efficient and asymptotically optimal. In~\cite{Karaman.Frazzoli:12}, a class of sampling-based algorithms is proposed to compute the optimal trajectory satisfying the given task specifications in the form of deterministic $\mu$-calculus.


Regarding the multi-robot open-loop motion planning, the approaches mainly fall into three categories: centralized planning in; e.g.,~\cite{Sanchez.Latombe:02,Xidias.Aspragathos:08}, decoupled planning in; e.g.,~\cite{Kant.Zucker:86,Simeon.Leroy.Lauumond:02} and priority planning in; e.g.,~\cite{Buckley:89,Erdmann.Lozano:87}. Centralized planning is complete but computationally expensive. In contrast, decoupled and priority planning can generate solutions quicker, but are incomplete. However, the existing algorithms assume the robots are cooperative and are not directly applicable to compute Nash equilibrium where none of self-interested robots is willing to unilaterally deviate from.


Another set of relevant papers are concerned with numerical methods for feedback differential games. There have been a very limited number of feedback differential games whose closed-form solutions are known, including homicidal-chauffeur and the lady-in-the-lake games ~\cite{TB-GO:99,Isaacs:99}. The methods based on partial differential
equations; e.g., in~\cite{MB-ICD:97,MB-MF-PS:99,PES:99}, viability theory; e.g.,
in~\cite{JPA:09,JPA-AB-PSP:11,PC-MQ-PSP:99} and level-set methods, e.g., in~\cite{Sethian:96} have been proposed to determine numerical solutions to differential games. However, the papers aforementioned only study one-player and two-player differential games. Multi-player linear-quadratic differential games have been studied in; e.g.,~\cite{TB-GO:99}.




\section{Problem formulation}\label{sec:formulation}

Consider a team of robots, labeled by $\mathcal{V}_{R} \triangleq \{1,\cdots,N\}$. Each robot is associated with a dynamic system governed by the following differential equation: \begin{align}\dot{x}_i(t) = f_i(x_i(t), u_i(t)), \label{e1}\end{align} where $x_i(t)\in
\mathcal{X}_i\subseteq\real^{n_i}$ is the state of robot~$i$, and $u_i(t)\in\mathcal{U}_i$ is the control of robot~$i$. For system~\eqref{e1}, the set of admissible control strategies for robot~$i$ is given by:
\begin{align*}\mathcal{U}_i \triangleq \{u_i(\cdot) \; : \; [0,+\infty)\rightarrow U_i,\;{\rm measurable}\},\end{align*} where $U_i\subseteq\real^{m_i}$.
For each robot~$i$, $\sigma^{[i]} : [0,+\infty) \rightarrow \mathcal{X}_i$ is a dynamically feasible trajectory if there are $T\geq0$ and $u_i : [0,T] \rightarrow \mathcal{U}_i$ such that: $(i)$ $\dot{\sigma}^{[i]}(t) = f_i(\sigma^{[i]}(t), u_i(t))$; $(ii)$ $\sigma^{[i]}(0) = x^{[i]}_{\rm init}$; $(iii)$ $\sigma^{[i]}(t)\in \mathcal{X}_i^F$ for $t\in[0,T]$; $(iv)$ $\sigma^{[i]}(T)\in\mathcal{X}_i^G$. The set of dynamically feasible trajectories for robot~$i$ is denoted by $\Sigma_i$. Note that the trajectories in $\Sigma_i$ do not account for the inter-robot collisions.

Let $\Pi_i$ be a finite set of atomic propositions and $\lambda_i : \mathcal{X}_i\rightarrow 2^{\Pi_i}$ associates each state in $\mathcal{X}_i$ with a set of atomic propositions in $\Pi_i$. Given a trajectory $\sigma^{[i]}$ of~\eqref{e1}, define by $T(\sigma^{[i]})$ the set of time instances when $\lambda_i(\sigma^{[i]}(t))$ changes. The word $w(\sigma^{[i]}) \triangleq \{w_0, w_1, \cdots\}$ generated by the trajectory $\sigma^{[i]}$ is such that $w_i = \lambda_i(\sigma^{[i]}(t_i))$ where $t_0 = 0$ and $T(\sigma^{[i]}) = \{t_1, t_2, \cdots\}$.

In this paper, we consider reachability tasks where each robot has to reach an open goal set $\mathcal{X}_i^G \subset \mathcal{X}$ and simultaneously maintain the state $x_i(t)$ inside a closed constraint set
$\mathcal{X}_i^F \subseteq \mathcal{X}$. As an example, let $\Pi_i = \{ p_G, p_F \}$ be the set of atomic propositions. The proposition $p_G$ is true if $x_i \in \mathcal{X}_i^G$ and similarly, $p_F$ is true if $x_i \in \mathcal{X}_i^F$. Consider an example task specification $\Phi_i$ expressed using the finite fragment of Linear Temporal Logic, (FLTL)~\cite{manna1995temporal} as $\Phi_i = \mathbf{F}\ p_G \wedge \mathbf{G}\ p_F$ where $\mathbf{F}$ is the eventually operator and $\mathbf{G}$ stands for the always operator. If the word formed by a trajectory $\sigma^{[i]}$ is such that for $w(\sigma^{[i]}) = w_0, w_1, \ldots, $ there exists some $w_k $ such that $p_G \in w_k$ and $p_F \in w_i$ for all $i \geq 0$, we say that the word $w(\sigma^{[i]})$ satisfies the LTL formula $\Phi_i$. Let us note that FLTL formulae such as those considered here can be automatically translated into automata. The word $w(\sigma^{[i]})$ satisfies the formula if it belongs to the language of the corresponding automaton. Please refer~\cite{baier2008principles} for a more thorough exposition of these concepts.
Denote by $[\Phi_i]\subseteq \Sigma_i$ the set of trajectories fulfilling $\Phi_i$. Each robot then determines a trajectory belonging to $[\Phi_i]$.

In addition to finding a trajectory that satisfies these specifications, the robot may have several other objectives such as reaching in the goal region in the shortest possible time.
To quantify these objectives, we define ${\tt Cost} : \Sigma \rightarrow \real_{\geq0}^p$ as the cost functional which maps each trajectory in $\Sigma\triangleq\bigotimes_{i\in \mathcal{V}_R}\Sigma_i$\footnote{$\bigotimes$ represents the product.} to a non-negative cost vector and each component of ${\tt Cost}$ corresponds to an objective of the robots. In what follows, we assume that ${\tt Cost}$ is continuous. In addition, the robots want to avoid the inter-robot collisions; i.e., keeping the state $x(t)$ outside the collision set $\mathcal{X}_{\rm col}$.

The above multi-robot motion planning problem is formulated as an open-loop non-cooperative game where each robot seeks to find a trajectory which is collision-free, fulfills its task specifications and minimizes the induced cost given the trajectories of other robots. That is, given $\sigma^{[-i]}\in \Sigma_{-i}$\footnote{We use $-i$ to denote all the robots other than $i$.}, each robot~$i$ wants to find a best trajectory in the feasible set ${\tt Feasible}_i(\Sigma_i,\sigma^{[-i]})\triangleq \{\sigma^{[i]}\in\Sigma_i\;|\; \sigma^{[i]}\in[\Phi_i],\;\; {\tt CollisionFreePath}(\sigma^{[i]},\sigma^{[-i]})=1\}$ where the procedure ${\tt CollisionFreePath}$ will be defined later. The solution notion we will use is Nash equilibrium formally stated as follows:

\begin{definition}[Nash equilibrium] The collection of trajectories $(\bar{\sigma}^{[i]})_{i\in \mathcal{V}_R}\in\Sigma$ is a Nash equilibrium if for any $i\in \mathcal{V}_R$, it holds that $\bar{\sigma}^{[i]}\in {\tt Feasible}_i(\Sigma_i,\bar{\sigma}^{[-i]})$ and there is no $\sigma^{[i]}\in {\tt Feasible}_i(\Sigma_i,\bar{\sigma}^{[-i]})$ such that ${\tt Cost}(\sigma^{[i]})\prec{\tt Cost}(\bar{\sigma}^{[i]})$\footnote{The relation $\preceq$ is defined on $\real^p$ and given by: for $a,b\in\real^p$, $a\preceq b$ if and only if $a_{\ell} \leq b_{\ell}$ for all $\ell\in\{1,\cdots,p\}$. Note that $\preceq$ is a partial order on $\real^p$.}.\label{def2}
\end{definition}

Intuitively, none of the robots can decrease its cost by \emph{unilaterally} deviating from a Nash equilibrium. Denote by $\Pi_{\rm NE}\subseteq \Sigma$ the set of Nash equilibria. Note that Definition~\ref{def2} is an extension of the standard one; e.g., in~\cite{TB-GO:99} where the cost functional of each player is scalar. We will compare Nash equilibrium with social (Pareto) optimum defined as follows:

\begin{definition}[Social (Pareto) optimum] The collection of trajectories $(\bar{\sigma}^{[i]})_{i\in \mathcal{V}_R}\in\Sigma$ is socially (Pareto) optimal if there is no $(\sigma^{[i]})_{i\in \mathcal{V}_R}\in \Sigma$ such that $\bigoplus_{i\in \mathcal{V}_R}{\tt Cost}(\sigma^{[i]}) \prec \bigoplus_{i\in \mathcal{V}_R}{\tt Cost}(\bar{\sigma}^{[i]})$.\footnote{$\bigoplus$ represents the summation.}\label{def1}
\end{definition}

Denote by $\Pi_{\rm SO}\subseteq \Sigma$ the set of social optimum. Throughout this paper, we assume that $\Pi_{\rm SO}\neq\emptyset$. In general, a Nash equilibrium may not be socially optimal. When ${\tt Cost}$ is scalar, the gap between the set of Nash equilibrium and the set of social optimum is usually characterized by price of anarchy and price of stability in; e.g.,~\cite{Nisan.Roughgarden.ea:07}.



\subsection{Primitives}

%

Here we define a set of primitives which will be used in the subsequent sections.


\paragraph{Sampling} The ${\tt Sample}(A)$ procedure returns uniformly random samples from set $A$.

\paragraph{Local steering} Given two states $x, y$, the ${\tt Steer}$ procedure returns a state $z$ by steering $x$ towards $y$ for at most $\eta > 0$ distance; i.e., ${\tt Stear}(x,y) \triangleq {\rm argmin}_{z\in \mathbb{B}(x,\eta)}\|z-y\|$.
In addition to this, we require that $\sigma(x, y)$, the trajectory connecting states $x$ and $y$, is such that $|T(\sigma(x,y))| \leq 1$, i.e., the label $\lambda(\sigma(x,y))$ changes at most once. This property of the local steering function is called \emph{trace inclusivity}~\cite{castro2013incremental}.



\paragraph{Nearest neighbor} Given a state $x$ and a finite set $S$ of states, the ${\tt Nearest}$ procedure returns the state in $S$ that is closest to $x$; i.e., $\displaystyle{{\tt Nearest}(S,x) \triangleq {\rm argmin}_{y \in S}\|y-x\|}$.

\paragraph{Near vertices} Given a state $x$, a finite set $S$ and a positive real number $r$, the ${\tt NearVertices}$ procedure returns the states in $S$ where each of them is $r$-close to $x$; ${\tt NearVertices}(S, x ,r) \triangleq \{y\in S\;|\; \|x-y\|\leq r\}$.

\paragraph{Path generation} Given a directed graph $G$ with a single root and no directed cycles, the ${\tt PathGeneration}(G)$ procedure returns the set of paths from the root to the leaf vertices.

\paragraph{Collision check of paths} Given a path $\sigma$ and a set of paths $\Pi$, the ${\tt CollisionFreePath}(\sigma,\Pi)$ procedure returns $1$ if $\sigma$ collides any path in $\Pi$; i.e., $\sigma(t)\in\mathcal{X}_{\rm free} \triangleq \bigotimes_{i\in \mathcal{V}_R}\mathcal{X}^F_i \cap \overline{\mathcal{X}}_{\rm col}$; otherwise returns $0$.



\paragraph{Feasible paths} Given the path sets of $\Sigma_i$ and $\sigma^{[-i]}$, ${\tt Feasible}_i(\Sigma_i,\sigma^{[-i]})$ is the set of paths $\sigma^{[i]}\in\Sigma_i$ such that for any $\sigma^{[i]}\in\Sigma_i$, it holds that ${\tt CollisionFreePath}(\sigma^{[i]},\sigma^{[-i]}) = 1$.

\paragraph{Weakly feasible paths} Given the path sets of $\Sigma_i$ and $\sigma^{[-i]}$, ${\tt WeakFeasible}_i(\Sigma_i,\sigma^{[-i]})$ is a subset of ${\tt Feasible}_i(\Sigma_i,\sigma^{[-i]})$ and consists of the paths $\sigma^{[i]}$ where for each path $\sigma^{[i]}$, there are a sequence of paths $\{\sigma^{[i]}_{\ell}\}$ with $\sigma^{[i]}_{\ell}\in\Sigma_i$ and a diminishing and non-negative sequence $\{\delta_{\ell}\}$ such that $(i)$ $\sigma^{[i]}_{\ell}$ converges to $\sigma^{[i]}$; $(ii)$ $\mathbb{B}(\sigma^{[i]}_{\ell}(t),\delta_{\ell})\in\mathcal{X}_i^F$; $(ii)$ $\|\sigma^{[i]}_{\ell}(t)-\sigma^{[j]}(t)\|\geq \epsilon+\delta_{\ell}$ for all $j\neq i$ for all $t$.

\paragraph{Strongly feasible paths} Given the path sets of $\Sigma_i$ and $\sigma^{[-i]})$, ${\tt StrongFeasible}_i(\Sigma_i,\sigma^{[-i]})$ is a subset of ${\tt Feasible}_i(\Sigma_i,\sigma^{[-i]})$ and consists of the paths where for each path $\sigma^{[i]}$, there are a sequence of paths $\{\sigma^{[i]}_{\ell}\}$ with $\sigma^{[i]}_{\ell}\in\Sigma_i$, a diminishing and non-negative sequence $\{\delta_{\ell}\}$ and $\delta>0$ such that $(i)$ $\sigma^{[i]}_{\ell}$ converges to $\sigma^{[i]}$;  $(ii)$ $\mathbb{B}(\sigma^{[i]}_{\ell}(t),\delta_{\ell})\in\mathcal{X}_i^F$; $(ii)$ $\|\sigma^{[i]}(t)-\sigma^{[j]}(t)\|\geq \epsilon+\delta$ for all $j\neq i$.

\section{iNash-trajectory Algorithm}\label{sec:iNash}

In this section, we propose the iNash-trajectory Algorithm to solve the open-loop game defined above. It is followed by the algorithm analysis and discussion.

\subsection{Algorithm statement}

The iNash-trajectory Algorithm leverages the RRG algorithm in~\cite{SK-EF:11}, iterative better response and model checking, and informally stated as follows. At each iteration~$k$, each robot~$i$ samples $\mathcal{X}_i$ once, and adds the new sample $x_\mathrm{rand}^{[i]}$ to its vertex set $V^{[i]}_k$. Robot~$i$ extends its previously generated graph $G^{[i]}_{k-1}$ towards the new sample $x_\mathrm{rand}^{[i]}$ via local steering, and obtains the new graph $G^{[i]}_k$. After they finish the construction of the new graphs, the active robots play a game on their product graph for \emph{one} round in a \emph{sequential} way. Robot~$i$ is active at time~$k$ if its goal set is reachable through some path of $G^{[i]}_k$. The active robot with the least index, say~$i$, first chooses a smaller-cost and feasible path on $G^{[i]}_k$ by performing the ${\tt BetterResponse}$ procedure in Algorithm~\ref{algorithm:BetterResponse}. Then robot~$i$ informs all other robots the new path. After that, the active robot with the second least index, say~$j$, performs the better response to update its path on $G^{[j]}_k$ and the new path is sent to other robots. The remaining active robots sequentially update the planned paths on their own graphs and announce the new paths to others. Once all the active robots finish the path updates, the game terminates for the current iteration~$k$. At the next iteration~$k+1$, the same steps are repeated. The iNash-trajectory Algorithm is formally stated in Algorithm~\ref{algorithm:iNash-trajectory}.

The iNash-trajectory algorithm is an anytime algorithm; i.e., assuming a solution can be found within the allotted planning time, then it is continually improved while planning time remains

\subsection{Discussion}

The ${\tt Extend}$ procedure is similar to that in the RRG algorithm~\cite{SK-EF:11} with the difference that the edges leaving from the new sample are not added. Instead, $G^{[i]}_k$ is identical to the auxiliary graph $G_n$ used in the proof of Theorem 38 in~\cite{SK-EF:11} for RRT$^*$. Notice that $G^{[i]}_k$ is a directed graph and does not include any directed circle. So there are a finite number of paths for the root to reach any leaf vertex and the ${\tt PathGeneration}$ procedure in Algorithm~\ref{algorithm:BetterResponse} is well-defined.

The tree structure returned by RRT$^*$ in~\cite{SK-EF:11} is more computationally efficient than the graph $G^{[i]}_k$ in our algorithm. However, the rewiring step in RRT$^*$ induces that $G^{[i]}_{k-1}$ may not be a subgraph of $G^{[i]}_k$. This property is crucial for the algorithm convergence to Nash equilibria in the next section.
To verify if $(\sigma^{[i]} \cap G_k^{[i]}) \in [\Phi_i]$ on Line 4 of the ${\tt BetterResponse}$ procedure, we check if the sequence $\xi_i = (\sigma^{[i]}\cap G^{[i]}_k)$ satisfies $\Phi_i$ by translating $\Phi_i$ into the corresponding Buchi automaton.


\begin{algorithm}[t] \small
  \For {$i = 1 : N$}{$V^{[i]}(0) \leftarrow x_\mathrm{init}^{[i]}$\; $E^{[i]}(0) \leftarrow \emptyset$;}
  $A_k \leftarrow \emptyset$\;
  $k\leftarrow 1$\;
  \While {$k < K$}{
  \For {$i = 1 : N$}{
    $x_\mathrm{rand}^{[i]} \leftarrow {\tt Sample}(\mathcal{X}_i)$\;
    $G^{[i]}_k\leftarrow {\tt Extend}(G^{[i]}_{k-1},x_\mathrm{rand}^{[i]})$\;}

  \For{$i\in \mathcal{V}_R\setminus A_{k-1}$}{\If{$V^{[i]}_k\cap \mathcal{X}_i^G\neq\emptyset$}{$A_k\leftarrow A_{k-1}\cup\{i\}$\;}}

  \For {$i \in A_k$}{
    $\tilde{\sigma}^{[i]}_k = \sigma^{[i]}_{k-1}$;}

  \For {$i = 1 : N$}{
  \If {$i \in A_k$}{
    $\Pi^{[i]}_k \leftarrow \{\{\sigma^{[j]}_k\}_{j\in A_k, j<i},\{\tilde{\sigma}^{[j]}_k\}_{j\in A_k, j>i}\}$\;
    $\sigma^{[i]}_k \leftarrow {\tt BetterResponse}(G^{[i]}_k, \Pi^{[i]}_k)$;}}
    $k \leftarrow k + 1$;
  }
  \caption{The iNash-trajectory Algorithm}
  \label{algorithm:iNash-trajectory}
\end{algorithm}

\begin{algorithm}[t] \small
  $V \leftarrow V^{[i]}_{k-1}$\;
  $E \leftarrow E^{[i]}_{k-1}$\;
  $x_\mathrm{nearest} \leftarrow {\tt Nearest} (E, x_\mathrm{rand}^{[i]})$\;
  $x_\mathrm{new} \leftarrow {\tt Steer} (x_\mathrm{nearest}, x_\mathrm{rand}^{[i]})$\;
  \If {${\tt ObstacleFree} (x_\mathrm{nearest},x_\mathrm{new})$} {
  $\mathcal{X}_\mathrm{near}\leftarrow {\tt NearVertices}(E,x_\mathrm{new},\min\{\gamma(\frac{\log k}{k})^{\frac{1}{n}},\eta\})$\;
  $V\leftarrow V\cup\{x_\mathrm{new}\}$\;
  \For {$x_\mathrm{near}\in \mathcal{X}_\mathrm{near}$}{ \If{${\tt ObstacleFree}(x_\mathrm{nearest},x_\mathrm{new})$}{$E\leftarrow E\cup\{(x_\mathrm{nearest},x_\mathrm{new})\}$\;}}}
  \Return{$G = (V, E)$}
  \caption{The ${\tt Extend}$ Procedure}
  \label{algorithm:extend-mRRG}
\end{algorithm}

\begin{algorithm}[t] \small
  $\mathbb{P}^{[i]}_k \leftarrow {\tt PathGeneration}(G^{[i]}_k)$\;
  $\mathbb{P}^{[i]}_{\rm f} \leftarrow \emptyset$\;
  \For{$\sigma^{[i]}\in \mathbb{P}^{[i]}_k$}{\If{${\tt CollisionFreePath}(\sigma^{[i]},\Pi^{[i]}_k) == 1\;\&\& \; (\sigma^{[i]}\cap G^{[i]}_k)\in [\Phi_i]$}{$\mathbb{P}^{[i]}_{\rm f}\leftarrow \mathbb{P}^{[i]}_{\rm f}\cup\{\sigma^{[i]}\}$\;}}
  $\sigma^{[i]}_{\rm min} \leftarrow \sigma^{[i]}_{k-1}$\;
  \For{$\sigma^{[i]} \in \mathbb{P}^{[i]}_{\rm f}$}{\If{${\tt Cost}(\sigma^{[i]}) \prec {\tt Cost}(\sigma^{[i]}_{\rm min})$}{$\sigma^{[i]}_{\rm min}\leftarrow \sigma^{[i]}$\;
  Break\;}}
  \Return{$\sigma^{[i]}_{\rm min}$}
  \caption{The ${\tt BetterResponse}$ Procedure}
  \label{algorithm:BetterResponse}
\end{algorithm}


\subsection{Analysis}

In this section, we analyze the asymptotic optimality, computational complexity, and communication cost of the iNash-trajectory Algorithm. Before doing that, we first prove the existence of Nash equilibrium.

\begin{lemma}[Existence of Nash equilibrium] It holds that $\Pi_{\rm SO} \subseteq \Pi_{\rm NE}$ and $\Pi_{\rm NE}$ is non-empty.\label{lem1}
\end{lemma}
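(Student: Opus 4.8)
The plan is to prove the inclusion $\Pi_{\rm SO} \subseteq \Pi_{\rm NE}$ and then read off non-emptiness for free: since the standing assumption gives $\Pi_{\rm SO} \neq \emptyset$, the inclusion immediately forces $\Pi_{\rm NE} \neq \emptyset$. So the entire content is the inclusion, which I would establish by contradiction, exploiting the fact that the social cost is the $\bigoplus$-sum of the individual costs and that a unilateral deviation changes only one summand.

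Fix a social optimum $(\bar{\sigma}^{[i]})_{i\in\mathcal{V}_R}\in\Pi_{\rm SO}$ and suppose, for contradiction, that it is \emph{not} a Nash equilibrium in the sense of Definition~\ref{def2}. Granting for the moment that each $\bar{\sigma}^{[i]}\in{\tt Feasible}_i(\Sigma_i,\bar{\sigma}^{[-i]})$ (discussed below), the only way the Nash condition can fail is that some robot~$i$ admits a profitable unilateral deviation: there is $\sigma^{[i]}\in{\tt Feasible}_i(\Sigma_i,\bar{\sigma}^{[-i]})$ with ${\tt Cost}(\sigma^{[i]})\prec{\tt Cost}(\bar{\sigma}^{[i]})$. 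I would then build the deviated profile $\tilde{\sigma}$ that replaces only robot~$i$'s trajectory, namely $\tilde{\sigma}^{[i]}=\sigma^{[i]}$ and $\tilde{\sigma}^{[j]}=\bar{\sigma}^{[j]}$ for $j\neq i$. Since $\sigma^{[i]}\in\Sigma_i$, this profile lies in $\Sigma$. The costs of robots $j\neq i$ are unchanged, so the two social-cost vectors differ only in the $i$-term; as $\prec$ is preserved under adding the common vector $\bigoplus_{j\neq i}{\tt Cost}(\bar{\sigma}^{[j]})$ to both sides, the strict inequality ${\tt Cost}(\sigma^{[i]})\prec{\tt Cost}(\bar{\sigma}^{[i]})$ yields $\bigoplus_{j\in\mathcal{V}_R}{\tt Cost}(\tilde{\sigma}^{[j]})\prec\bigoplus_{j\in\mathcal{V}_R}{\tt Cost}(\bar{\sigma}^{[j]})$. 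This contradicts the social optimality of $\bar{\sigma}$ in Definition~\ref{def1}, completing the inclusion.

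The step I expect to be the real obstacle is the parenthetical one above: verifying the \emph{membership} clause of the Nash definition, namely that every component $\bar{\sigma}^{[i]}$ of a social optimum already satisfies $\bar{\sigma}^{[i]}\in[\Phi_i]$ and ${\tt CollisionFreePath}(\bar{\sigma}^{[i]},\bar{\sigma}^{[-i]})=1$. This does not follow from the summation argument, since social optimality as stated ranges over the whole product $\Sigma$, whose members need not be collision-free nor fulfill the task specifications. I would close this gap by making explicit the modeling convention that renders it automatic: either restrict $\Pi_{\rm SO}$ to profiles feasible for every robot, or extend ${\tt Cost}$ so that any infeasible or colliding profile receives a value $\prec$-dominated by that of any feasible profile (so an infeasible profile can never be socially optimal once some feasible profile exists, which $\Pi_{\rm SO}\neq\emptyset$ guarantees). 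Under either convention a social optimum is feasible robot-by-robot, the membership clause holds, and the contradiction argument sketched above goes through verbatim.
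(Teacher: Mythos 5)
Your proof is correct and takes essentially the same route as the paper: both argue by contradiction, replacing a single robot's trajectory in a social optimum with a profitable unilateral deviation, observing that the summands for $j\neq i$ are unchanged so that $\bigoplus_{j\in\mathcal{V}_R}{\tt Cost}(\tilde{\sigma}^{[j]})\prec\bigoplus_{j\in\mathcal{V}_R}{\tt Cost}(\bar{\sigma}^{[j]})$, contradicting Definition~\ref{def1}, with non-emptiness then free from the standing assumption $\Pi_{\rm SO}\neq\emptyset$. Your additional discussion of the membership clause $\bar{\sigma}^{[i]}\in{\tt Feasible}_i(\Sigma_i,\bar{\sigma}^{[-i]})$ flags a point the paper's proof passes over silently (its contradiction only treats the profitable-deviation mode of Nash failure), and the modeling convention you propose is a sensible way to make that step airtight.
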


\begin{proof} We will show $\Pi_{\rm SO} \subseteq \Pi_{\rm NE}$ by contradiction. Assume $\Pi_{\rm SO} \cap \overline{\Pi}_{\rm NE} \neq \emptyset$ with $\overline{\Pi}_{\rm NE}$ being the complement of $\Pi_{\rm NE}$, and pick any $(\sigma^{[i]})_{i\in \mathcal{V}_R}\in\Pi_{\rm SO} \cap \overline{\Pi}_{\rm NE}$. Since $(\sigma^{[i]})_{i\in \mathcal{V}_R}\notin\Pi_{\rm NE}$, there is some $i\in \mathcal{V}_R$ and $\bar{\sigma}^{[i]}\in {\tt Feasible}_i(\Sigma_i,\sigma^{[-i]})$ such that ${\tt Cost}(\bar{\sigma}^{[i]}) \prec {\tt Cost}(\sigma^{[i]})$.

Let $(\bar{\sigma}^{[i]})_{i\in \mathcal{V}_R}$ with $\bar{\sigma}^{[j]} = \sigma^{[j]}$ for $j\neq i$. Note that $\bar{\sigma}^{[j]}\in {\tt Feasible}_j(\Sigma_j,\bar{\sigma}^{[-j]})$ for all $j\neq i$, and ${\tt Cost}(\bar{\sigma}^{[i]}) = {\tt Cost}(\sigma^{[i]})$. It implies that $\bigoplus_{i\in \mathcal{V}_R}{\tt Cost}(\bar{\sigma}^{[i]}) \prec \bigoplus_{i\in \mathcal{V}_R}{\tt Cost}(\sigma^{[i]})$ contradicting that $(\sigma^{[i]})_{i\in \mathcal{V}_R}$ is socially optimal. Therefore, $\Pi_{\rm SO} \subseteq \Pi_{\rm NE}$. Since $\Pi_{\rm SO}$ is non-empty, so is $\Pi_{\rm NE}$.
\end{proof}

In what follows, we establishes the following two instrumental results for asymptotic optimality. In particular, Lemma~\ref{lem3} shows that given $\sigma^{[-i]}$, any path in ${\tt WeakFeasible}_i(\Sigma_i,\sigma^{[-i]})$ can be approximated by a sequence of paths in $\{{\tt Feasible}_i(G^{[i]}_k,\sigma^{[-i]})\}_{k\geq0}$. Lemma~\ref{lem4} shows that ${\tt Strong Feasible}_i(\Sigma_i,\bar{\sigma}^{[-i]})$ is a subset of $\sigma^{[i]}\in {\tt Strong Feasible}_i(\Sigma_i,\sigma^{[-i]}_k)$ for all but a finite $k$ where $\{\sigma^{[-i]}_k\}_{k\geq0}$ converges to $\bar{\sigma}^{[-i]}$.

\begin{lemma} Consider any $i\in \mathcal{V}_R$ and $\sigma^{[-i]}\in \bigotimes_{j\neq i}\Sigma_j$. For any $\varepsilon>0$, there is $T\geq0$ such that for any $k\geq T$ and $\sigma^{[i]}\in {\tt WeakFeasible}_i(\Sigma_i,\sigma^{[-i]})$, there is a path $\hat{\sigma}^{[i]}_k\in {\tt Feasible}_i(G^{[i]}_k,\sigma^{[-i]})$ such that $\|{\tt Cost}(\hat{\sigma}^{[i]}_k) - {\tt Cost}(\sigma^{[i]})\|\leq\varepsilon$.\label{lem3}\end{lemma}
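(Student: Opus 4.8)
The plan is to reduce the statement to the covering/asymptotic-optimality property of the RRG construction established in~\cite{SK-EF:11}, and then to let the two clearance margins built into the definition of ${\tt WeakFeasible}_i$ do the work of keeping the approximating graph path both obstacle-free and collision-free.

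First I would fix $i$, $\sigma^{[-i]}$ and $\varepsilon>0$, take an arbitrary $\sigma^{[i]}\in {\tt WeakFeasible}_i(\Sigma_i,\sigma^{[-i]})$, and invoke the definition of weak feasibility to extract the approximating sequence $\{\sigma^{[i]}_{\ell}\}$ with clearances $\{\delta_{\ell}\}$: each $\sigma^{[i]}_{\ell}$ satisfies $\mathbb{B}(\sigma^{[i]}_{\ell}(t),\delta_{\ell})\subseteq\mathcal{X}_i^F$ and $\|\sigma^{[i]}_{\ell}(t)-\sigma^{[j]}(t)\|\geq \epsilon+\delta_{\ell}$ for all $j\neq i$ and all $t$, with $\sigma^{[i]}_{\ell}\to\sigma^{[i]}$. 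Since ${\tt Cost}$ is continuous, I fix one index $\ell$ for which $\|{\tt Cost}(\sigma^{[i]}_{\ell})-{\tt Cost}(\sigma^{[i]})\|\leq\varepsilon/2$ and write $\bar\sigma:=\sigma^{[i]}_{\ell}$, $\delta:=\delta_{\ell}>0$. This replaces the possibly boundary-touching target $\sigma^{[i]}$ by a strictly interior path carrying a positive clearance $\delta$ both from the obstacle set and from the $\epsilon$-collision tube around $\sigma^{[-i]}$, at a cost within $\varepsilon/2$ of the original.

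Next I would apply the covering argument underlying the RRG asymptotic optimality in~\cite{SK-EF:11}. Because the connection radius $r_k=\min\{\gamma(\log k/k)^{1/n},\eta\}$ tends to $0$, almost surely there is $T$ such that for all $k\geq T$ the vertices of $G^{[i]}_k$ are dense enough along $\bar\sigma$ to pick $x_0=x_{\rm init}^{[i]},x_1,\dots,x_M$ within $\delta/2$ of $\bar\sigma$, with consecutive vertices closer than $r_k$, so that each edge $(x_m,x_{m+1})$ is present; concatenating the steering segments yields a graph path $\hat\sigma^{[i]}_k$ that stays within $\delta/2$ of $\bar\sigma$ uniformly in time. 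The clearance margins then close the argument: staying within $\delta/2$ of $\bar\sigma$ keeps $\hat\sigma^{[i]}_k$ inside $\mathcal{X}_i^F$ (so $\mathbf{G}\,p_F$ holds) and at distance $\geq\epsilon+\delta/2>\epsilon$ from every $\sigma^{[j]}$, giving ${\tt CollisionFreePath}(\hat\sigma^{[i]}_k,\sigma^{[-i]})=1$; trace inclusivity of ${\tt Steer}$ prevents spurious label changes on the edges and places the endpoint in the open set $\mathcal{X}_i^G$ (so $\mathbf{F}\,p_G$ holds), hence $\hat\sigma^{[i]}_k\in[\Phi_i]$ and $\hat\sigma^{[i]}_k\in {\tt Feasible}_i(G^{[i]}_k,\sigma^{[-i]})$. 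Enlarging $T$ if necessary and using continuity of ${\tt Cost}$ once more gives $\|{\tt Cost}(\hat\sigma^{[i]}_k)-{\tt Cost}(\bar\sigma)\|\leq\varepsilon/2$, and the triangle inequality delivers the claim.

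The step I expect to be the main obstacle is the quantifier order: the lemma asks for a single $T$ valid simultaneously for every $\sigma^{[i]}\in {\tt WeakFeasible}_i$, whereas the construction produces $T$ through the clearance $\delta$, which itself depends on $\sigma^{[i]}$ (paths of smaller clearance need a finer graph, hence a larger $k$, before they can be tracked). Making $T$ genuinely uniform requires either a compactness argument on the weakly feasible set that yields a single clearance-versus-cost modulus, or reading the conclusion per fixed $\sigma^{[i]}$; I would therefore either establish such a uniform modulus or state the result for each $\sigma^{[i]}$ and flag uniformity as contingent on that extra hypothesis. A secondary point to treat with care is that everything holds almost surely over the random sampling, so $T$ is a random time, and the vertex-selection and edge-existence claims must be justified through the sample-dispersion estimates of~\cite{SK-EF:11} rather than asserted deterministically.
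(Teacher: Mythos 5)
Your proposal is correct and takes essentially the same route as the paper: the paper's entire proof is a one-line reduction to the covering argument of Theorem~38 in~\cite{SK-EF:11} (noting that $G^{[i]}_k$ coincides with the auxiliary graph $G_n$ used there) together with the continuity of ${\tt Cost}$, which is precisely the argument you carry out in detail via the clearance $\delta_\ell$ extracted from the definition of ${\tt WeakFeasible}_i$. The two caveats you flag --- the uniformity of $T$ over all weakly feasible paths and the almost-sure (random) nature of $T$ --- are genuine subtleties, but the paper's statement and proof gloss over both of them as well, so they are not defects of your argument relative to the paper's.
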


\begin{proof} Recall that $G^{[i]}_k$ is identical to the auxiliary graph $G_n$ used in the proof of Theorem 38 for RRT$^*$. One can finish the proof by following the lines for Theorem 38 in~\cite{SK-EF:11} and using the fact of ${\tt Cost}$ being continuous.
\end{proof}

\begin{lemma} Assume that $\{\sigma^{[j]}_k\}_{k\geq0}$ converges to $\bar{\sigma}^{[j]}$ for each $j\in \mathcal{V}_R$. There is $T'\geq0$ such that for each $\sigma^{[i]}\in {\tt Strong Feasible}_i(\Sigma_i,\bar{\sigma}^{[-i]})$, it holds that $\sigma^{[i]}\in {\tt Strong Feasible}_i(\Sigma_i,\sigma^{[-i]}_k)$ for all $k\geq T'$.\label{lem4}\end{lemma}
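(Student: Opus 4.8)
The plan is to unwind the definition of $\mathtt{StrongFeasible}_i$ and exploit the crucial gap $\delta > 0$ that distinguishes strong feasibility from mere weak feasibility. Fix any $\sigma^{[i]} \in \mathtt{StrongFeasible}_i(\Sigma_i, \bar{\sigma}^{[-i]})$. By definition there exist a sequence $\{\sigma^{[i]}_\ell\}$ converging to $\sigma^{[i]}$, a diminishing non-negative sequence $\{\delta_\ell\}$, and a strictly positive constant $\delta > 0$ such that $\mathbb{B}(\sigma^{[i]}_\ell(t), \delta_\ell) \subseteq \mathcal{X}_i^F$ and $\|\sigma^{[i]}(t) - \bar{\sigma}^{[j]}(t)\| \geq \epsilon + \delta$ for all $j \neq i$ and all $t$. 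To verify $\sigma^{[i]} \in \mathtt{StrongFeasible}_i(\Sigma_i, \sigma^{[-i]}_k)$ for large $k$, I would reuse the very same sequences $\{\sigma^{[i]}_\ell\}$ and $\{\delta_\ell\}$; the only condition that involves the other robots' paths is the separation inequality, so that is the single clause I must re-establish against $\sigma^{[-i]}_k$ rather than $\bar{\sigma}^{[-i]}$.

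The key step is a uniform triangle-inequality estimate. First I would convert the componentwise convergence $\sigma^{[j]}_k \to \bar{\sigma}^{[j]}$ for each $j \neq i$ into a uniform statement: choose $T'$ large enough that $\sup_t \|\sigma^{[j]}_k(t) - \bar{\sigma}^{[j]}(t)\| \leq \delta/2$ for every $j \neq i$ and every $k \geq T'$, where the finite maximum over the $N-1$ indices $j$ is taken so that a single $T'$ works simultaneously. Then for any such $k$, any $j \neq i$, and any $t$,
\begin{align*}
\|\sigma^{[i]}(t) - \sigma^{[j]}_k(t)\| &\geq \|\sigma^{[i]}(t) - \bar{\sigma}^{[j]}(t)\| - \|\bar{\sigma}^{[j]}(t) - \sigma^{[j]}_k(t)\| \\
&\geq (\epsilon + \delta) - \tfrac{\delta}{2} = \epsilon + \tfrac{\delta}{2}.
\end{align*}
This shows the separation clause holds against $\sigma^{[-i]}_k$ with the strictly positive gap $\delta' \triangleq \delta/2 > 0$. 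Since the first two clauses (convergence of $\sigma^{[i]}_\ell$ to $\sigma^{[i]}$ and the ball containment $\mathbb{B}(\sigma^{[i]}_\ell(t), \delta_\ell) \subseteq \mathcal{X}_i^F$) do not reference the other robots and therefore carry over verbatim, all three defining conditions of $\mathtt{StrongFeasible}_i(\Sigma_i, \sigma^{[-i]}_k)$ are met with witnesses $\{\sigma^{[i]}_\ell\}$, $\{\delta_\ell\}$, and $\delta'$, giving $\sigma^{[i]} \in \mathtt{StrongFeasible}_i(\Sigma_i, \sigma^{[-i]}_k)$ for all $k \geq T'$.

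The main subtlety, and the point I would be careful to pin down, is the \emph{uniformity} of the threshold $T'$: it must be independent of the particular $\sigma^{[i]}$ and must work across all indices $j \neq i$ and all times $t$ at once. Uniformity over $j$ is harmless since there are only finitely many robots, so I take the worst of finitely many convergence thresholds. Uniformity over $t$ is what forces me to read the convergence $\sigma^{[j]}_k \to \bar{\sigma}^{[j]}$ as convergence in the supremum (uniform) norm on trajectories rather than mere pointwise convergence; I would note that this is the natural topology implicit in the paper's convergence statements. Critically, because the fixed gap $\delta$ comes from the definition of strong feasibility and does \emph{not} shrink with $\ell$, a single $T'$ suffices for every element $\sigma^{[i]}$ of $\mathtt{StrongFeasible}_i(\Sigma_i, \bar{\sigma}^{[-i]})$ only if $\delta$ can be lower-bounded uniformly across that set; since the statement quantifies $\sigma^{[i]}$ after fixing the convergence data, I would choose $T'$ depending only on the rate of convergence of $\{\sigma^{[-i]}_k\}$ and the smallest admissible half-gap, so that the same $T'$ serves uniformly. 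This is precisely where the strict positivity of $\delta$ (as opposed to the merely diminishing $\delta_\ell$ of weak feasibility) does the essential work.
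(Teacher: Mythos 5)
Your proposal matches the paper's proof essentially verbatim: both choose $T'$ so that $\|\bar{\sigma}^{[j]}(t)-\sigma^{[j]}_k(t)\|\leq \delta/2$ for all $k\geq T'$ and all $j$, then apply the triangle inequality to obtain $\|\sigma^{[i]}(t)-\sigma^{[j]}_k(t)\|\geq \epsilon+\delta/2$ and reuse the original witnesses $\{\sigma^{[i]}_\ell\}$, $\{\delta_\ell\}$ with the halved gap $\delta/2$. Your closing remark about the uniformity of $T'$ over $\sigma^{[i]}$ even flags a quantifier subtlety that the paper's own proof silently glosses over (its $T'$ is chosen using a $\delta$ that depends on the particular $\sigma^{[i]}$), so your version is, if anything, slightly more careful.
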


\begin{proof} Since $\{\sigma^{[j]}_k\}_{k\geq0}$ converges to $\bar{\sigma}^{[j]}$, there is $T'\geq0$ such that $\|\bar{\sigma}^{[j]}(t)-\sigma^{[j]}_k(t)\|\leq \frac{\delta}{2}$ for all $k\geq T'$ and $j\in \mathcal{V}_R$.

Recall that $\sigma^{[i]}\in {\tt Strong Feasible}_i(\Sigma_i,\bar{\sigma}^{[-i]})$. Then there are $\{\sigma^{[i]}_{\ell}\}$ converging to $\sigma^{[i]}$, $\{\delta_{\ell}\}$ and $\delta > 0$ such that $\mathbb{B}(\sigma^{[i]}_{\ell}(t),\delta_{\ell})\in \mathcal{X}_i^F$ and $\|\sigma^{[i]}(t)-\bar{\sigma}^{[j]}(t)\|\geq \epsilon+\delta$. This further implies that $\|\sigma^{[i]}(t)-\sigma^{[j]}_k(t)\|\geq \epsilon+\frac{\delta}{2}$ for all $k\geq T'$. We then conclude that $\sigma^{[i]}\in {\tt Strong Feasible}_i(\Sigma_i,\sigma^{[-i]}_k)$ for all $k\geq T'$.
\end{proof}

Let $\hat{\Sigma}$ be the set of limit points of $\{\sigma^{[i]}_k\}_{i\in \mathcal{V}_R}$. We are ready to show the convergence of the iNash-trajectory Algorithm.

\begin{theorem}[Asymptotic optimality] There is $\bar{c}^{[i]}_{\ell}\geq0$ such that ${\tt Cost}_{\ell}(\bar{\sigma}^{[i]}) = \bar{c}^{[i]}_{\ell}$ for any $\{\bar{\sigma}^{[i]}\}_{i\in \mathcal{V}_R}\in\hat{\Sigma}$. In addition, any limit point $\{\bar{\sigma}^{[i]}\}_{i\in \mathcal{V}_R}$ is a Nash equilibrium.\label{the2}
\end{theorem}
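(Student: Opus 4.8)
The plan is to prove the two assertions separately: the first from monotonicity of the realized costs, and the second by contradiction through Lemmas~\ref{lem3} and~\ref{lem4}.

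First I would establish the scalar convergence behind the first claim. Inspecting the ${\tt BetterResponse}$ procedure, the returned path is initialized to $\sigma^{[i]}_{k-1}$ and replaced only by a path of strictly dominating cost; hence, once robot~$i$ becomes active (which, by $A_{k}\supseteq A_{k-1}$, then remains true), ${\tt Cost}(\sigma^{[i]}_k)\preceq{\tt Cost}(\sigma^{[i]}_{k-1})$ for all $k$. Each scalar sequence $\{{\tt Cost}_\ell(\sigma^{[i]}_k)\}_{k}$ is therefore non-increasing and bounded below by $0$, so it converges to some $\bar c^{[i]}_\ell\ge0$. For any limit point $\{\bar\sigma^{[i]}\}\in\hat\Sigma$, passing to the subsequence along which $\sigma^{[i]}_{k}\to\bar\sigma^{[i]}$ and using continuity of ${\tt Cost}$ gives ${\tt Cost}_\ell(\bar\sigma^{[i]})=\lim_k{\tt Cost}_\ell(\sigma^{[i]}_k)=\bar c^{[i]}_\ell$, a value independent of the limit point chosen. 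This proves the first claim and identifies $\bar c^{[i]}_\ell$ with the limit of the full cost sequence.

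For the second claim I would argue by contradiction. Fix $\{\bar\sigma^{[i]}\}\in\hat\Sigma$ and suppose it is not a Nash equilibrium. By Definition~\ref{def2} there are $i\in\mathcal{V}_R$ and a deviation $\sigma^{[i]}_*\in{\tt Feasible}_i(\Sigma_i,\bar\sigma^{[-i]})$ with ${\tt Cost}(\sigma^{[i]}_*)\prec{\tt Cost}(\bar\sigma^{[i]})=\bar c^{[i]}$, so that some component enjoys a strict gap $\gamma>0$. Using continuity of ${\tt Cost}$ I would first perturb $\sigma^{[i]}_*$ to a nearby strongly feasible path $\tilde\sigma^{[i]}\in{\tt StrongFeasible}_i(\Sigma_i,\bar\sigma^{[-i]})$ whose cost still satisfies ${\tt Cost}(\tilde\sigma^{[i]})\prec\bar c^{[i]}$. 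Lemma~\ref{lem4} then transfers strong feasibility from the limit profile to the running profile: $\tilde\sigma^{[i]}\in{\tt StrongFeasible}_i(\Sigma_i,\sigma^{[-i]}_k)$, and hence is weakly feasible there, for all $k\ge T'$. Finally Lemma~\ref{lem3} realizes $\tilde\sigma^{[i]}$ inside the growing graph: for large $k$ there is $\hat\sigma^{[i]}_k\in{\tt Feasible}_i(G^{[i]}_k,\sigma^{[-i]}_k)$ with $\|{\tt Cost}(\hat\sigma^{[i]}_k)-{\tt Cost}(\tilde\sigma^{[i]})\|\le\varepsilon$; choosing $\varepsilon<\gamma$ yields ${\tt Cost}(\hat\sigma^{[i]}_k)\prec\bar c^{[i]}\preceq{\tt Cost}(\sigma^{[i]}_{k-1})$. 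Thus, at all large iterations, robot~$i$ has in $\mathbb{P}^{[i]}_{\rm f}$ a collision-free, specification-satisfying path strictly dominating its current one with cost bounded below $\bar c^{[i]}$, which should force ${\tt Cost}(\sigma^{[i]}_k)$ strictly below $\bar c^{[i]}$ in the gap component, contradicting ${\tt Cost}_\ell(\sigma^{[i]}_k)\to\bar c^{[i]}_\ell$.

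I expect the main obstacle to be the simultaneous bookkeeping of two moving objects in the last step: the opponent profile $\sigma^{[-i]}_k$ (seen by robot~$i$ as the mixed set $\Pi^{[i]}_k$ of already-updated and stale neighbor paths, all converging to $\bar\sigma^{[-i]}$) drifts while the graph $G^{[i]}_k$ is simultaneously refined, so the approximation in Lemma~\ref{lem3}, whose threshold is stated for a \emph{fixed} opponent profile, and the feasibility margin supplied by Lemma~\ref{lem4} must be reconciled. The uniform strong-feasibility margin $\delta>0$ is what I expect to make this work, since it leaves room to absorb the opponent drift and the shrinking graph-resolution error at once; the monotone growth $G^{[i]}_{k-1}\subseteq G^{[i]}_k$ is also essential so that a good candidate, once realized, persists. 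A secondary delicate point is that ${\tt BetterResponse}$ accepts the \emph{first} improving path rather than a cost-minimizing one, so closing the contradiction rigorously requires either strengthening the update to a best response or an additional argument ensuring that the \emph{realized} update, and not merely an available candidate, has cost below the limit $\bar c^{[i]}$.
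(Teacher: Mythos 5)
Your first paragraph reproduces the paper's own argument for the first claim essentially verbatim (activation monotonicity $A_{k-1}\subseteq A_k$, graph monotonicity $G^{[i]}_{k-1}\subseteq G^{[i]}_k$ keeping $\sigma^{[i]}_{k-1}$ feasible, monotone bounded scalar cost sequences, continuity of ${\tt Cost}$), and that part is fine. For the second claim, however, the gap you flagged at the end is genuine, and it is exactly what the paper's proof is structured to avoid. You argue by contradiction from the \emph{existence} of an improving candidate in $\mathbb{P}^{[i]}_{\rm f}$ at large $k$, and then need the \emph{realized} update to push ${\tt Cost}(\sigma^{[i]}_k)$ strictly below $\bar{c}^{[i]}$; this cannot be concluded, since a strict improvement at every iteration with vanishing decrements is perfectly consistent with ${\tt Cost}_{\ell}(\sigma^{[i]}_k)\to\bar{c}^{[i]}_{\ell}$. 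The paper instead works with a per-iteration \emph{non-dominance invariant} of the returned path, asserted as~\eqref{e5}: no path in ${\tt Feasible}_i(G^{[i]}_k,\Pi^{[i]}_k)$ dominates $\sigma^{[i]}_k$. It then propagates this non-existence statement forward --- through Lemma~\ref{lem3} applied with the frozen opponent profile $\Pi^{[i]}_k$ (which also disposes of your ``opponent drift'' worry), through Lemma~\ref{lem4} with the uniform margin $\delta$, then through the limit $k\to\infty$ and finally $\varepsilon\to0$ --- obtaining directly that no strongly feasible deviation dominates $\bar{\sigma}^{[i]}$. Because the whole argument is a quantifier-chase on non-existence, it never requires the realized cost to fall below the limit value; that is the missing idea in your proposal. (To be fair, you have in effect exposed a soft spot in the paper as well: with its ${\tt Break}$ statement, ${\tt BetterResponse}$ returns the \emph{first} improving path, so~\eqref{e5} as stated is only clearly valid at iterations where no improvement is found; the paper asserts it for all $k$ without comment.)

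Your perturbation step is a second unjustified move: from $\sigma^{[i]}_*\in{\tt Feasible}_i(\Sigma_i,\bar{\sigma}^{[-i]})$ with dominating cost you claim a nearby $\tilde{\sigma}^{[i]}\in{\tt StrongFeasible}_i(\Sigma_i,\bar{\sigma}^{[-i]})$ still dominating. A feasible path may approach the inter-robot threshold $\epsilon$ or run along $\partial\mathcal{X}_i^F$, in which case no approximating sequence with a uniform margin $\delta>0$ exists, and continuity of ${\tt Cost}$ cannot manufacture one. The paper does not attempt this bridge: its final conclusion~\eqref{e4} excludes dominating deviations only within ${\tt StrongFeasible}_i(\Sigma_i,\bar{\sigma}^{[-i]})$, and it then calls the limit point a Nash equilibrium --- i.e., the equilibrium property is actually established only relative to strongly feasible deviations, a silent weakening of Definition~\ref{def2}. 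So your proposal attempts to prove strictly more than the paper does; to repair it, either justify the perturbation under an additional regularity assumption on $\mathcal{X}_i^F$ and the deviation set, or weaken the claimed conclusion to strongly feasible deviations as the paper implicitly does.
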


\begin{proof} First of all, notice that $A_k \subseteq A_{k+1}$. It renders that the probability of $A_k \neq \mathcal{V}_R$ exponentially decays to zero. Without loss of any generality, we consider $A_k = \mathcal{V}_R$ for $k\geq0$ in the remainder of the proof.

Consider iteration $k$ and assume that robots $1,\cdots,i-1$ have updated their paths via Step 16 in the iNash-trajectory Algorithm. It is the turn of robot~$i$ to update the path via Step 16. Since $G^{[i]}_{k}\supseteq G^{[i]}_{k-1}$, so $\tilde{\sigma}^{[i]}_k = \sigma^{[i]}_{k-1}\in {\tt Feasible}_i(G^{[i]}_k,\Pi^{[i]}_k)$. This implies \begin{align*}{\tt Cost}(\sigma^{[i]}_k) \preceq {\tt Cost}(\tilde{\sigma}^{[i]}_k) = {\tt Cost}(\sigma^{[i]}_{k-1}).\end{align*} Hence, for each $\ell\in\{1,\cdots,p\}$, the sequence $\{{\tt Cost}_{\ell}(\sigma^{[i]}_k)\}_{k\geq0}$ is non-increasing. Since ${\tt Cost}_{\ell}(\sigma^{[i]}_k)$ is lower bounded by zero, so the sequence $\{{\tt Cost}_{\ell}(\sigma^{[i]}_k)\}$ converges to some constant $\bar{c}^{[i]}_{\ell}\geq0$. Hence, we have ${\tt Cost}_{\ell}(\bar{\sigma}^{[i]}) = \bar{c}^{[i]}_{\ell}$ for any limit point $\{\bar{\sigma}^{[i]}\}_{i\in \mathcal{V}_R}\in\hat{\Sigma}$.

We proceed to show that any limit point $\{\bar{\sigma}^{[i]}\}_{i\in \mathcal{V}_R}\in\hat{\Sigma}$ is a Nash equilibrium. There is a subsequence $\pi_i\triangleq\{k_i\}$ such that $\{\sigma_k^{[i]}\}_{k\in \pi_i}$ converge to $\bar{\sigma}^{[i]}$. Let $\pi \triangleq \cap_{i\in \mathcal{V}_R}\pi_i$. Without loss of any generality, we assume that $\pi = \mathbb{Z}_{\geq0}$ in the remainder of the proof.

For each $i\in \mathcal{V}_R$, there is no $\tilde{\sigma}^{[i]}_k\in {\tt Feasible}_i(G^{[i]}_k,\Pi^{[i]}_k)$ such that \begin{align}{\tt Cost}(\sigma^{[i]}_k) \succeq {\tt Cost}(\tilde{\sigma}^{[i]}_k).\label{e5}\end{align}

Pick any $\varepsilon>0$. By Lemma~\ref{lem3} with $\sigma^{[-i]} = \Pi^{[i]}_k$, there is $T$ such that for any $\hat{\sigma}^{[i]}_k\in {\tt WeakFeasible}_i(\Sigma_i,\Pi^{[i]}_k)$, there is $\tilde{\sigma}^{[i]}_k\in {\tt Feasible}_i(G^{[i]}_k,\Pi^{[i]}_k)$ such that \begin{align}{\tt Cost}(\tilde{\sigma}^{[i]}_k) \preceq {\tt Cost}(\hat{\sigma}^{[i]}_k)+\varepsilon \textbf{1}_p,\label{e6}\end{align} where $\textbf{1}_p$ is the column vector with $p$ ones.

The combination of~\eqref{e5} and~\eqref{e6} renders that there is no $\hat{\sigma}^{[i]}_k\in {\tt WeakFeasible}_i(\Sigma_i,\Pi^{[i]}_k)$ and $k\geq T$ such that: \begin{align}{\tt Cost}(\sigma^{[i]}_k)\succeq{\tt Cost}(\hat{\sigma}^{[i]}_k)+\varepsilon\textbf{1}_p.\label{e7}\end{align}

Notice that ${\tt StrongFeasible}_i(\Sigma_i,\Pi^{[i]}_k)\subseteq{\tt WeakFeasible}_i(\Sigma_i,\Pi^{[i]}_k)\subseteq {\tt Feasible}_i(\Sigma_i,\Pi^{[i]}_k)$. Combine this property,~\eqref{e6} and Lemma~\ref{lem4} with $\sigma^{[i]}_k = \Pi^{[i]}_k$, and we reach that there is $T'\geq T$ such that there is no $\sigma^{[i]}\in {\tt StrongFeasible}_i(\Sigma_i,\bar{\sigma}^{[-i]})$ and $k\geq T'$ such that: \begin{align}{\tt Cost}(\sigma^{[i]}_k)\succeq {\tt Cost}(\sigma^{[i]})+\varepsilon\textbf{1}_p.\label{e2}\end{align}

Take the limit on $k$ in~\eqref{e2}. Then there is no ${\sigma}^{[i]}\in {\tt StrongFeasible}_i(\Sigma_i,\bar{\sigma}^{[-i]})$ such that \begin{align}{\tt Cost}(\bar{\sigma}^{[i]})\succeq {\tt Cost}({\sigma}^{[i]})+\varepsilon\textbf{1}_p.\label{e3}\end{align} Since~\eqref{e3} holds for $\varepsilon>0$, no ${\sigma}^{[i]}\in {\tt StrongFeasible}_i(\Sigma_i,\bar{\sigma}^{[-i]})$ exists such that \begin{align}{\tt Cost}(\bar{\sigma}^{[i]})\succeq {\tt Cost}({\sigma}^{[i]}).\label{e4}\end{align} Since~\eqref{e4} holds for any~$i\in \mathcal{V}_R$, it establishes that the limit point $\{\bar{\sigma}^{[i]}\}_{i\in \mathcal{V}_R}$ is a Nash equilibrium.

The completeness of the algorithm is a direct result of Lemma~\ref{lem1} and the asymptotic convergence to $\Pi_{\rm NE}$. It completes the proof.
\end{proof}

Next, we will analyze the computational complexity of the algorithm in terms of the ${\tt CollisionFreePath}$ procedure. Let $\theta_n$ to be the total number of calls to the ${\tt CollisionFreePath}$ procedure at iteration~$n$.

\begin{lemma}[Computational complexity] It holds that $\theta_n \leq \bigoplus_{i\in \mathcal{V}_R}|\mathbb{P}^{[i]}_k|$, where $\mathbb{P}^{[i]}_k$ is defined in Algorithm~\ref{algorithm:BetterResponse}.\label{lem2}
\end{lemma}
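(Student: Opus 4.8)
The plan is to count the calls to the ${\tt CollisionFreePath}$ procedure directly, by tracing through the three algorithms and locating every point at which this procedure is invoked. First I would observe that, among Algorithms~\ref{algorithm:iNash-trajectory}, \ref{algorithm:extend-mRRG} and~\ref{algorithm:BetterResponse}, the ${\tt CollisionFreePath}$ procedure appears only inside the ${\tt BetterResponse}$ procedure (Algorithm~\ref{algorithm:BetterResponse}). In particular, the ${\tt Extend}$ procedure calls only ${\tt ObstacleFree}$, which checks against static obstacles and is distinct from the inter-robot ${\tt CollisionFreePath}$ check, while Algorithm~\ref{algorithm:iNash-trajectory} itself makes no direct call. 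Hence every call to ${\tt CollisionFreePath}$ at a given iteration originates from some invocation of ${\tt BetterResponse}$.

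Next I would count the calls made by a single invocation of ${\tt BetterResponse}(G^{[i]}_k,\Pi^{[i]}_k)$ for a fixed robot~$i$. The procedure first sets $\mathbb{P}^{[i]}_k \leftarrow {\tt PathGeneration}(G^{[i]}_k)$ and then enters a loop iterating over each $\sigma^{[i]}\in \mathbb{P}^{[i]}_k$. The loop body evaluates a conditional whose first conjunct is ${\tt CollisionFreePath}(\sigma^{[i]},\Pi^{[i]}_k)$, so this procedure is invoked exactly once per path in $\mathbb{P}^{[i]}_k$. Therefore a single invocation of ${\tt BetterResponse}$ for robot~$i$ makes exactly $|\mathbb{P}^{[i]}_k|$ calls to ${\tt CollisionFreePath}$.

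Finally I would sum over the robots that actually invoke ${\tt BetterResponse}$ at the iteration in question. Inspecting Algorithm~\ref{algorithm:iNash-trajectory}, the ${\tt BetterResponse}$ procedure is executed on Step~16 precisely for each robot $i\in A_k$ (the active set), and exactly once for each such robot. Consequently $\theta_n = \sum_{i\in A_k}|\mathbb{P}^{[i]}_k|$, identifying the iteration index $n$ with $k$. Since $A_k\subseteq \mathcal{V}_R$ and each summand is non-negative, we obtain $\theta_n = \sum_{i\in A_k}|\mathbb{P}^{[i]}_k| \leq \sum_{i\in \mathcal{V}_R}|\mathbb{P}^{[i]}_k| = \bigoplus_{i\in \mathcal{V}_R}|\mathbb{P}^{[i]}_k|$, which is the desired bound.

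The argument is essentially a bookkeeping exercise, so there is no deep obstacle; the only points requiring care are confirming that ${\tt CollisionFreePath}$ is invoked nowhere else --- in particular distinguishing it from ${\tt ObstacleFree}$ in the ${\tt Extend}$ procedure --- and that any short-circuit evaluation of the conjunction in the loop does not alter the count, since ${\tt CollisionFreePath}$ is the first conjunct and is thus always evaluated exactly once per loop iteration. The inequality (rather than equality) in the statement is explained simply by the restriction of the sum to the active set $A_k$ rather than all of $\mathcal{V}_R$.
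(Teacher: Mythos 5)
Your proof is correct: ${\tt CollisionFreePath}$ is invoked exactly once per path in $\mathbb{P}^{[i]}_k$ (as the first conjunct of the test in the ${\tt BetterResponse}$ loop), ${\tt BetterResponse}$ is called once for each $i\in A_k$, and no other procedure calls ${\tt CollisionFreePath}$, giving $\theta_n = \sum_{i\in A_k}|\mathbb{P}^{[i]}_k| \leq \bigoplus_{i\in \mathcal{V}_R}|\mathbb{P}^{[i]}_k|$. The paper states this lemma without proof, and your bookkeeping argument --- including distinguishing ${\tt ObstacleFree}$ in the ${\tt Extend}$ procedure from ${\tt CollisionFreePath}$, and explaining the inequality as the restriction of the sum to the active set $A_k$ --- is precisely the intended justification.
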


In Lemma~\ref{lem2}, the quantity $|\mathbb{P}^{[i]}_k|$ is independent of the robot number. So the worst computational complexity of the iNash-trajectory grows linearly in the robot number. It is in contrast to the exponential dependency in centralized path planning. The computational efficiency comes with the non-cooperative game theoretic formulation where each robot myopically responds to others. Note that a Nash equilibrium may not be socially optimal for the robot team.

Let $\vartheta_n$ to be the number of exchanged paths in iteration~$n$. The following lemma shows the worst communication cost is linear in the robot number.

\begin{lemma}[Communication cost] $\vartheta_n \leq 2N$.\label{lem5}
\end{lemma}

\begin{proof} At iteration~$n$, all the active robots broadcast their planned trajectories before the path updates. Then the active robots sequentially update and broadcast the new paths. So $\vartheta_n = 2|A_k| \leq 2N$.
\end{proof}

\subsection{Comparison}


In order to demonstrate the scalability of iNash-trajectory Algorithm, we consider the benchmark algorithm, the iOptimalControl Algorithm. The key difference between the iOptimalControl and iNash-trajectory Algorithms is that a centralized authority in the iOptimalControl Algorithm determines a social optimum on the product graph at each iteration. In the iOptimalControl Algorithm, we use the notation $(\sigma\cap \bigotimes_{i\in A_k}G^{[i]}_k)\in[\Phi]$ for $(\sigma^{[i]}\cap G^{[i]}_k)\in[\Phi_i]$ for all $i\in A_k$.


\begin{algorithm}[t] \small
  \For {$i = 1 : N$}{$V^{[i]}(0) \leftarrow x_\mathrm{init}^{[i]}$\; $E^{[i]}(0) \leftarrow \emptyset$;}
  $A_k \leftarrow \emptyset$\;
  $k\leftarrow 1$\;
  \While {$k < K$}{
  \For {$i = 1 : N$}{
    $x_\mathrm{rand}^{[i]} \leftarrow {\tt Sample}(\mathcal{X}_i)$\;
    $G^{[i]}_k\leftarrow {\tt Extend}(G^{[i]}_{k-1},x_\mathrm{rand}^{[i]})$\;}

  \For{$i\in \mathcal{V}_R\setminus A_{k-1}$}{\If{$V^{[i]}_k\cap \mathcal{X}_i^G\neq\emptyset$}{$A_k\leftarrow A_{k-1}\cup\{i\}$\;}}

  $(\sigma^{[i]}_k)_{i\in A_k}\leftarrow{\tt OptimalTrajectory}(\bigotimes_{i\in A_k}G^{[i]}_k)$\;

  $k \leftarrow k + 1$;
  }
  \caption{The iOptimalControl Algorithm}
  \label{algorithm:iOptimalControl}
\end{algorithm}

\begin{algorithm}[t] \small
  \For{$i\in A_k$}{$\mathbb{Q}^{[i]}_k \leftarrow {\tt PathGeneration}(G^{[i]}_k)$\;}

  \For{$i\in A_k$}{
  $\mathbb{P}^{[i]}_{\rm f} \leftarrow \emptyset$\;
  \For{$\sigma^{[i]}\in \mathbb{Q}^{[i]}_k$}{\If{${\tt CollisionFreePath}(\sigma^{[i]},\mathbb{Q}^{[-i]}_k) == 1\;\&\&\; (\sigma\cap \bigotimes_{i\in A_k}G^{[i]}_k)\in[\Phi]$}{$\mathbb{P}^{[i]}_{\rm f}\leftarrow \mathbb{P}^{[i]}_{\rm f}\cup\{\sigma^{[i]}\}$\;}}}

  $\sigma_{\rm min} \leftarrow {\tt Sample}(\bigotimes_{i\in A_k}\mathbb{P}^{[i]}_{\rm f})$\;

  \For{$\sigma \in \bigotimes_{i\in A_k}\mathbb{P}^{[i]}_{\rm f}$}{\If{$\bigoplus_{i\in A_k}{\tt Cost}(\sigma^{[i]}) \prec \bigoplus_{i\in A_k}{\tt Cost}(\sigma^{[i]}_{\rm min})$}{$\sigma_{\rm min}\leftarrow \sigma$\;}}
  \Return{$\sigma_{\rm min}$}
  \caption{The ${\tt OptimalTrajectory}$ Procedure}
  \label{algorithm:OptimalTrajectory}
\end{algorithm}

The following theorem guarantees the asymptotic optimality of the iOptimalControl Algorithm.

\begin{theorem}[Asymptotic optimality] Any limit point $\{\bar{\sigma}^{[i]}\}_{i\in \mathcal{V}_R}$ is a social optimum.\label{the3}
\end{theorem}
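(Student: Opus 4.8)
The plan is to follow the architecture of the proof of Theorem~\ref{the2}, but to replace the per-robot better-response step with a joint argument over the product graph, since here a single centralized Pareto-minimization ({\tt OptimalTrajectory}) replaces the sequential per-robot updates. First I would reuse the activation argument verbatim: $A_k\subseteq A_{k+1}$, so the probability that $A_k\neq\mathcal{V}_R$ decays to zero, and without loss of generality I take $A_k=\mathcal{V}_R$ for all $k$. I would then fix a limit point $\{\bar{\sigma}^{[i]}\}_{i\in\mathcal{V}_R}$ together with a common subsequence $\pi$ along which $\sigma^{[i]}_k\to\bar{\sigma}^{[i]}$ for every $i$, exactly as in the second half of the proof of Theorem~\ref{the2}.

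The structural fact I would record first is that the combination $(\sigma^{[i]}_k)_{i\in\mathcal{V}_R}$ returned by {\tt OptimalTrajectory} is \emph{non-dominated} on the product graph: because the procedure replaces its running choice only upon a strict $\prec$-improvement, the returned tuple is such that no collision-free, task-satisfying combination on $\bigotimes_{i\in\mathcal{V}_R}G^{[i]}_k$ has strictly smaller aggregate cost $\bigoplus_{i\in\mathcal{V}_R}{\tt Cost}(\cdot)$. This replaces the monotone-cost step of Theorem~\ref{the2}; note the monotonicity argument there does not transfer directly, since {\tt OptimalTrajectory} re-initializes through {\tt Sample} rather than carrying over $(\sigma^{[i]}_{k-1})$, but the theorem only asserts social optimality of each limit point and not a common limiting cost, so non-dominatedness is exactly what is needed.

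The social-optimality claim I would then establish by contradiction. Suppose $\{\bar{\sigma}^{[i]}\}$ is not socially optimal; then there is a competitor $(\sigma^{[i]})_{i\in\mathcal{V}_R}$ with $\bigoplus_{i\in\mathcal{V}_R}{\tt Cost}(\sigma^{[i]})\prec\bigoplus_{i\in\mathcal{V}_R}{\tt Cost}(\bar{\sigma}^{[i]})$, which, via the chain ${\tt StrongFeasible}\subseteq{\tt WeakFeasible}\subseteq{\tt Feasible}$ used in Theorem~\ref{the2}, I would take to be strongly feasible. Fixing $\varepsilon>0$, a product version of Lemma~\ref{lem3} yields, for each $i$ and all large $k\in\pi$, a path on $G^{[i]}_k$ approximating $\sigma^{[i]}$ within $\varepsilon$ in cost, and Lemma~\ref{lem4} together with the uniform separation $\epsilon+\delta$ built into strong feasibility guarantees that the assembled tuple $(\tilde{\sigma}^{[i]}_k)$ stays collision-free on the product graph, so that $\bigoplus_{i\in\mathcal{V}_R}{\tt Cost}(\tilde{\sigma}^{[i]}_k)\preceq\bigoplus_{i\in\mathcal{V}_R}{\tt Cost}(\sigma^{[i]})+\varepsilon\textbf{1}_p$. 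Choosing $\varepsilon$ small enough that this bound is $\prec\bigoplus_{i\in\mathcal{V}_R}{\tt Cost}(\bar{\sigma}^{[i]})$ and invoking continuity of ${\tt Cost}$ to get $\bigoplus_{i\in\mathcal{V}_R}{\tt Cost}(\sigma^{[i]}_k)\to\bigoplus_{i\in\mathcal{V}_R}{\tt Cost}(\bar{\sigma}^{[i]})$ along $\pi$, I conclude that for large $k\in\pi$ the feasible tuple $(\tilde{\sigma}^{[i]}_k)$ strictly dominates $(\sigma^{[i]}_k)$ on the product graph, contradicting non-dominatedness.

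The hard part will be the joint approximation. Lemma~\ref{lem3} is stated for one robot with all others held fixed, whereas here every robot's path must be perturbed onto its own graph \emph{simultaneously} while the whole tuple remains collision-free; the strong-feasibility separation $\delta>0$ is precisely what permits a single uniform perturbation scale to preserve pairwise clearance, so the delicate step is to assemble the per-robot approximations of Lemma~\ref{lem3} into one product-graph tuple and to propagate the vector comparison through the $\varepsilon\textbf{1}_p$ slack (as in~\eqref{e6}--\eqref{e4}) uniformly in $k$.
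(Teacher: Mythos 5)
Your proposal is correct at the paper's own level of rigor, but it takes a genuinely different route: the paper's entire proof of Theorem~\ref{the3} is one sentence, declaring the result ``a direct result of Theorem 38 in~\cite{SK-EF:11}'' --- i.e., the authors view iOptimalControl as RRG/RRT$^*$ machinery running on the product graph, with \texttt{OptimalTrajectory} recomputing a best feasible tuple at every iteration, and invoke the asymptotic-optimality theorem wholesale, just as they did for Lemma~\ref{lem3}. Your reconstruction instead rebuilds the argument in the style of Theorem~\ref{the2}: per-iteration non-dominatedness of the returned tuple (which does hold --- the update chain in Algorithm~\ref{algorithm:OptimalTrajectory} is decreasing in $\prec$, so by transitivity no previously examined tuple can dominate the final one, even though $\prec$ is only a partial order), a product version of Lemma~\ref{lem3}, and the strong-feasibility buffer $\delta$ to keep the simultaneously perturbed approximants mutually clear. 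This buys real content that the citation hides: Theorem 38 of~\cite{SK-EF:11} is a single-robot statement and says nothing about joint approximation on a product of independently sampled graphs, inter-robot collision constraints, or vector-valued costs; and you correctly identify that the \texttt{Sample} re-initialization in \texttt{OptimalTrajectory} destroys the cost monotonicity used in the first half of Theorem~\ref{the2}'s proof, so the common-limiting-cost claim genuinely does not transfer and non-dominatedness is the right substitute. Three caveats, the first two inherited from the paper rather than introduced by you: (i) summing the per-robot slacks gives $N\varepsilon\mathbf{1}_p$ rather than $\varepsilon\mathbf{1}_p$ (harmless, since $\varepsilon$ is arbitrary), and for $p>1$ negating $\prec$ through an additive slack only excludes competitors that improve by a uniform margin in every component --- the same looseness already present in~\eqref{e5}--\eqref{e7}; (ii) restricting competitors to strongly feasible tuples matches what the proof of Theorem~\ref{the2} actually establishes, not the literal Definition~\ref{def1}, which quantifies over all of $\Sigma$; (iii) as literally written, Algorithm~\ref{algorithm:OptimalTrajectory} filters each $\sigma^{[i]}$ for collision against the entire set $\mathbb{Q}^{[-i]}_k$ of other robots' paths, under which your assembled approximant would eventually fail the filter as the graphs densify --- your reading, in which collision is checked on the chosen tuple, is clearly the intended semantics, but you are silently repairing the pseudocode when you assert that the approximant tuple lies in the set over which \texttt{OptimalTrajectory} minimizes.
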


\begin{proof} It is a direct result of Theorem 38 in~\cite{SK-EF:11}.
\end{proof}

Next, we will analyze the computational complexity of the algorithm in terms of the ${\tt CollisionFreePath}$ procedure. Let $\theta_n'$ to be the total number of calls to the ${\tt CollisionFreePath}$ procedure at iteration~$n$.

\begin{lemma}[Computational complexity] It holds that $\theta'_n = \bigotimes_{i\in \mathcal{V}_R}|\mathbb{Q}^{[i]}_k|$, where $\mathbb{Q}^{[i]}_k$ is defined in Algorithm~\ref{algorithm:OptimalTrajectory}.\label{lem6}
\end{lemma}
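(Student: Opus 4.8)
The plan is to count the invocations of ${\tt CollisionFreePath}$ generated by the centralized ${\tt OptimalTrajectory}$ procedure (Algorithm~\ref{algorithm:OptimalTrajectory}) and show that they are in bijection with the joint trajectories admitted by the product graph. First I would reduce to the worst case $A_k = \mathcal{V}_R$, which also reconciles the index set $\mathcal{V}_R$ in the statement with the index set $A_k$ appearing in the procedure: exactly as in the proof of Theorem~\ref{the2}, $A_k \subseteq A_{k+1}$ and the probability that $A_k \neq \mathcal{V}_R$ decays to zero, so asymptotically every robot is active and the dominant cost is incurred in this regime. Under this reduction the sets produced by ${\tt PathGeneration}$ are $\{\mathbb{Q}^{[i]}_k\}_{i\in \mathcal{V}_R}$, one per robot.

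The core of the argument isolates the single structural difference between ${\tt OptimalTrajectory}$ and the decentralized ${\tt BetterResponse}$ (Algorithm~\ref{algorithm:BetterResponse}). In ${\tt BetterResponse}$, robot $i$ tests each of its candidate paths against a single fixed tuple of the others' paths, so the checks accumulate additively, yielding the sum of Lemma~\ref{lem2}. In ${\tt OptimalTrajectory}$, by contrast, a global optimum can only be certified by verifying collision-freeness of every joint trajectory realizable on $\bigotimes_{i\in \mathcal{V}_R} G^{[i]}_k$. These joint trajectories are precisely the elements of the Cartesian product $\bigotimes_{i\in \mathcal{V}_R}\mathbb{Q}^{[i]}_k$, so I would establish a bijection between joint trajectories and collision checks: the relevant accounting unit is the joint-trajectory check, since ${\tt CollisionFreePath}(\sigma^{[i]},\mathbb{Q}^{[-i]}_k)$ must be resolved against each tuple of the remaining robots' paths. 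Counting these gives $\big|\bigotimes_{i\in \mathcal{V}_R}\mathbb{Q}^{[i]}_k\big| = \bigotimes_{i\in \mathcal{V}_R}|\mathbb{Q}^{[i]}_k|$.

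I would then argue the count is \emph{exact} rather than an upper bound. Unlike ${\tt BetterResponse}$, which contains a ${\tt Break}$ that can terminate the scan early once an improving path is found, ${\tt OptimalTrajectory}$ performs no such short-circuit: to guarantee social optimality it must examine the entire product set, so every joint trajectory is checked exactly once. This yields the equality $\theta'_n = \bigotimes_{i\in \mathcal{V}_R}|\mathbb{Q}^{[i]}_k|$ and, juxtaposed with the additive bound of Lemma~\ref{lem2}, exposes the exponential-in-$N$ cost of centralized planning against the linear-in-$N$ cost of the game-theoretic scheme.

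The main obstacle I anticipate is making the counting genuinely exact, namely reconciling the per-robot, nested-loop phrasing of the pseudocode—where ${\tt CollisionFreePath}$ is written once for each $\sigma^{[i]} \in \mathbb{Q}^{[i]}_k$ with the \emph{set} $\mathbb{Q}^{[-i]}_k$ as its second argument—with the joint-trajectory product. The delicate step is to fix the unit of work so that no joint trajectory is double-counted across the different robots' loops and none is omitted; a naive unfolding would attribute each joint trajectory to several robots' loops and overcount by a factor of $N$. Once the bijection between joint trajectories and atomic collision checks is pinned down, the cardinality computation $\big|\bigotimes_{i\in \mathcal{V}_R}\mathbb{Q}^{[i]}_k\big| = \prod_{i\in \mathcal{V}_R}|\mathbb{Q}^{[i]}_k|$ is routine.
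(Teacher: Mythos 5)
The paper states Lemma~\ref{lem6} without any proof---like Lemma~\ref{lem2}, it is offered as immediate from inspecting the pseudocode---so there is no authorial argument to compare yours against; your proposal reconstructs the justification the paper leaves implicit, and its skeleton is the evidently intended one: reduce to $A_k=\mathcal{V}_R$ exactly as in the proof of Theorem~\ref{the2}, take one atomic collision check per joint trajectory of the product $\bigotimes_{i\in\mathcal{V}_R}\mathbb{Q}^{[i]}_k$, observe that Algorithm~\ref{algorithm:OptimalTrajectory}, unlike Algorithm~\ref{algorithm:BetterResponse}, contains no \texttt{Break} that could short-circuit the scan, and conclude the exact product count $\prod_{i\in\mathcal{V}_R}|\mathbb{Q}^{[i]}_k|$, which juxtaposed with Lemma~\ref{lem2} gives the exponential-versus-linear contrast the paper is after.

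The one place you stop short is precisely the obstacle you flag at the end, and it cannot be ``pinned down'' from the pseudocode---it has to be imposed as a convention, so you should say so rather than promise a derivation. A literal count of the invocations written in Algorithm~\ref{algorithm:OptimalTrajectory} is additive: the nested loops call ${\tt CollisionFreePath}(\sigma^{[i]},\mathbb{Q}^{[-i]}_k)$ once per pair $(i,\sigma^{[i]})$, i.e., $\sum_{i\in\mathcal{V}_R}|\mathbb{Q}^{[i]}_k|$ top-level calls. If instead you adopt the unit implicit in Lemma~\ref{lem2}---one call checks one path of robot $i$ against one fixed tuple of the others, as in ${\tt BetterResponse}$ where the second argument $\Pi^{[i]}_k$ is a single tuple---then unfolding robot $i$'s loop costs $|\mathbb{Q}^{[i]}_k|\cdot\prod_{j\neq i}|\mathbb{Q}^{[j]}_k|=\prod_{j\in\mathcal{V}_R}|\mathbb{Q}^{[j]}_k|$, and summing over the $N$ loops yields $N\prod_{j\in\mathcal{V}_R}|\mathbb{Q}^{[j]}_k|$: the factor-$N$ overcount you fear is what a faithful unfolding actually produces, not a bookkeeping error to be avoided. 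The stated equality $\theta'_n=\prod_{i\in\mathcal{V}_R}|\mathbb{Q}^{[i]}_k|$ holds only under the further stipulation that the check of a given joint tuple is performed once and shared (memoized) across all $N$ per-robot loops---this \emph{is} your bijection, but as a definition of the unit of work, not a provable correspondence. Two smaller points: your premise that a global optimum ``can only be certified by verifying collision-freeness of every joint trajectory'' overstates what the pseudocode does (it filters each $\mathbb{Q}^{[i]}_k$ separately and then takes the product of the filtered sets), though this concerns the correctness of ${\tt OptimalTrajectory}$ rather than the call count; and all three accounting conventions ($\sum_i$, $\prod_i$, $N\prod_i$) support the paper's qualitative conclusion anyway, since in Algorithm~\ref{algorithm:OptimalTrajectory} the second argument of each call has product cardinality while in Algorithm~\ref{algorithm:BetterResponse} it is a single $(N-1)$-tuple.
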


The above lemma shows that the computational complexity exponentially grows vs.\ robot number. Table~\ref{ta:com2} summarizes the comparison of the iOptimalControl and iNash-trajectory Algorithms where the prices of anarchy and stability are compared for the case $p = 1$. In particular, the price of stability (POS)~\cite{Nisan.Roughgarden.ea:07} is the ratio between the minimum additive cost function value in $\Pi_{\rm NE}$ and that of one in $\Pi_{\rm SO}$, and defined as follows:
\begin{align*}{\rm POS} = \frac{\inf_{\sigma\in\Pi_{\rm NE}}\bigoplus_{i\in \mathcal{V}_R}{\tt Cost}(\sigma^{[i]})}{\bigoplus_{i\in \mathcal{V}_R}{\tt Cost}(\bar{\sigma}^{[i]})},\end{align*} for any $\bar{\sigma}\in\Pi_{\rm SO}$. By Lemma~\ref{lem1}, we know $\Pi_{\rm SO}\subseteq \Pi_{\rm NE}$ and thus POS is equal to 1. On the other hand, the price of anarchy (POA)~\cite{Nisan.Roughgarden.ea:07} is the ratio between the maximum additive cost function value in $\Pi_{\rm NE}$ and that of one in $\Pi_{\rm SO}$, and given by: \begin{align*}{\rm POA} = \frac{\sup_{\sigma\in\Pi_{\rm NE}}\bigoplus_{i\in \mathcal{V}_R}{\tt Cost}(\sigma^{[i]})}{\bigoplus_{i\in \mathcal{V}_R}{\tt Cost}(\bar{\sigma}^{[i]})},\end{align*} for any $\bar{\sigma}\in\Pi_{\rm SO}$. The value of POA depends on a number of factors; e.g., the obstacle locations, the dynamic systems and so on. It is interesting to find a lower bound on $\Pi_{\rm SO}$ given more information as in; e.g.,~\cite{Johari.Tsitsiklis:04}, and utilize mechanism design to eliminate the price of anarchy.

\begin{table}[ht]
\caption{The comparison of the iOptimalControl and iNash-trajectory Algorithms}
\centering
\begin{tabular}{|c|c|c|}
  \hline
  & \text{iOptimalControl} & \text{iNash-trajectory}
\tabularnewline
  \hline
  \text{Solution Notion}  & \text{Social optimum} & \text{Nash equilibrium}
\tabularnewline
  \hline
  \text{Solution Feasibility}  & \text{Yes} & \text{Yes}
\tabularnewline
  \hline
  \text{Price of stability}  & \text{N/A} & \text{One}
\tabularnewline
  \hline
  \text{Price of anarchy}  & \text{N/A} & \text{Unknown}
\tabularnewline
  \hline
  \text{Coordination}  & \text{High} & \text{Low}
\tabularnewline
  \hline
  \text{Asymptotic optimality}  & \text{Yes} & \text{Yes}
\tabularnewline
  \hline
  \text{Computational complexity}  & \text{Exponential} & \text{Linear}
\tabularnewline
  \hline
\end{tabular}\label{ta:com2}
\end{table}

\section{Experiments}\label{sec:experiments}

We perform two experiments in simulation to evaluate the performance of iNash. The first involves 8 circular robots moving in an environment with randomly generated obstacles (Figure~\ref{fig:exp1}-left), while the second involves 6 robots in a traffic intersection scenario (Figure~\ref{fig:exp1}-right); both involve state spaces consisting of first order dynamics and time. Robots are holonomic discs with radii of $0.5$ meters. 

We compare iNash to two prioritized methods that are not guaranteed to return a Nash-Equilibrium, but are arguably similar to our proposed algorithm. The first is the standard prioritized approach from \cite{Erdmann.Lozano:87}. Each robot builds its own random graph; then robots select their paths in order such that the path of robot $i$ does not conflict with robots ${1, \ldots, i-1}$.
The second is an any-time version of the prioritized method. Each time robot $i$ finds a better path that does not conflict with the paths of robots $1, \ldots, i-1$, then for $j = i+1, i+2, \ldots$ (in order) robot $j$ must choose a new path that does not conflict with robots $1, \ldots, j-1$. This differs from our algorithm (where new paths must respect the paths of all other robots), and the solution is not guaranteed to converge to a Nash Equilibrium.

For experiments we consider the task specifications for each robot to be of the form
$\Phi_i = \mathbf{F}\ p_{G_i} \wedge \mathbf{G}\ p_F$, i.e., each robot tries to reach a different goal region in the shortest possible distance
while respecting the same constraint set $\mathcal{X}^F$. The automaton consists of two states (see Fig.~\ref{fig:buchi}).

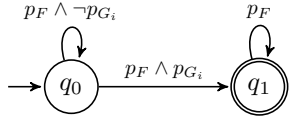
\begin{figure}[!t]
\centering
\usetikzlibrary{arrows,positioning,automata,shadows,fit,shapes}
\begin{tikzpicture}[->,>=stealth',shorten >=1pt,auto,node distance=1.5cm, semithick, initial text=]
\tikzset{every state/.style={minimum size=1pt}}

\node[state, initial] 			at (0,0)		(q0) {$q_0$};
\node[state, accepting]      	at (2.5, 0)   	(q1) {$q_1$};

\begin{scope}[every node/.style={scale=.8}]
	\path 	(q0) edge [loop above] node {$p_F \wedge \neg p_{G_i}$} (q0)
				edge node {$p_F \wedge p_{G_i}$} (q1)
			(q1) edge [loop above] node {$p_F$} (q1);
\end{scope}
\end{tikzpicture}
\caption{\small Automaton for $\Phi_i = \mathbf{F}\ p_{G_i} \wedge \mathbf{G}\ p_F$.}
\label{fig:buchi}
\end{figure}

\begin{figure*}[t]
    \centering
 \includegraphics[width=0.3\linewidth, trim=155 230 155 230, clip=true]{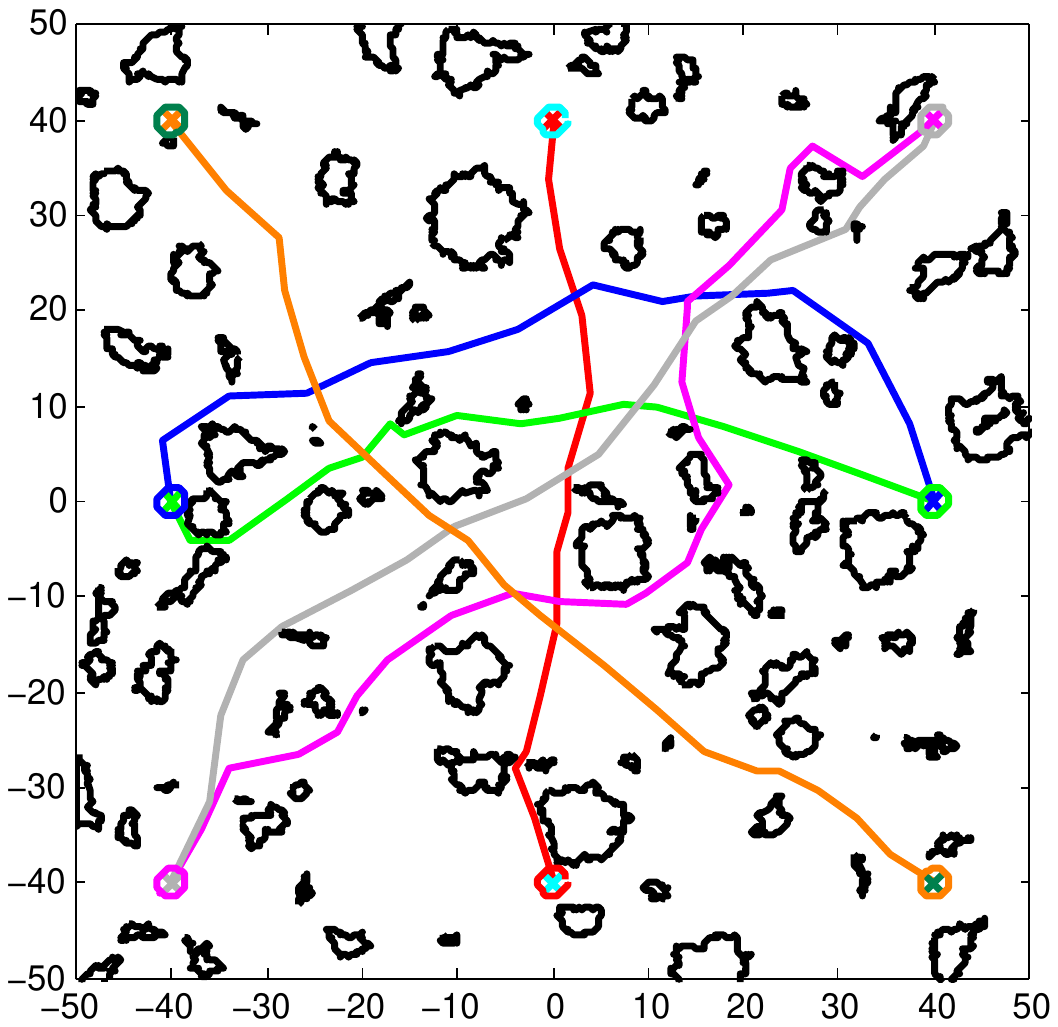}
 \hspace{0.2in}
 \includegraphics[width=0.3\linewidth, trim=155 230 155 230, clip=true]{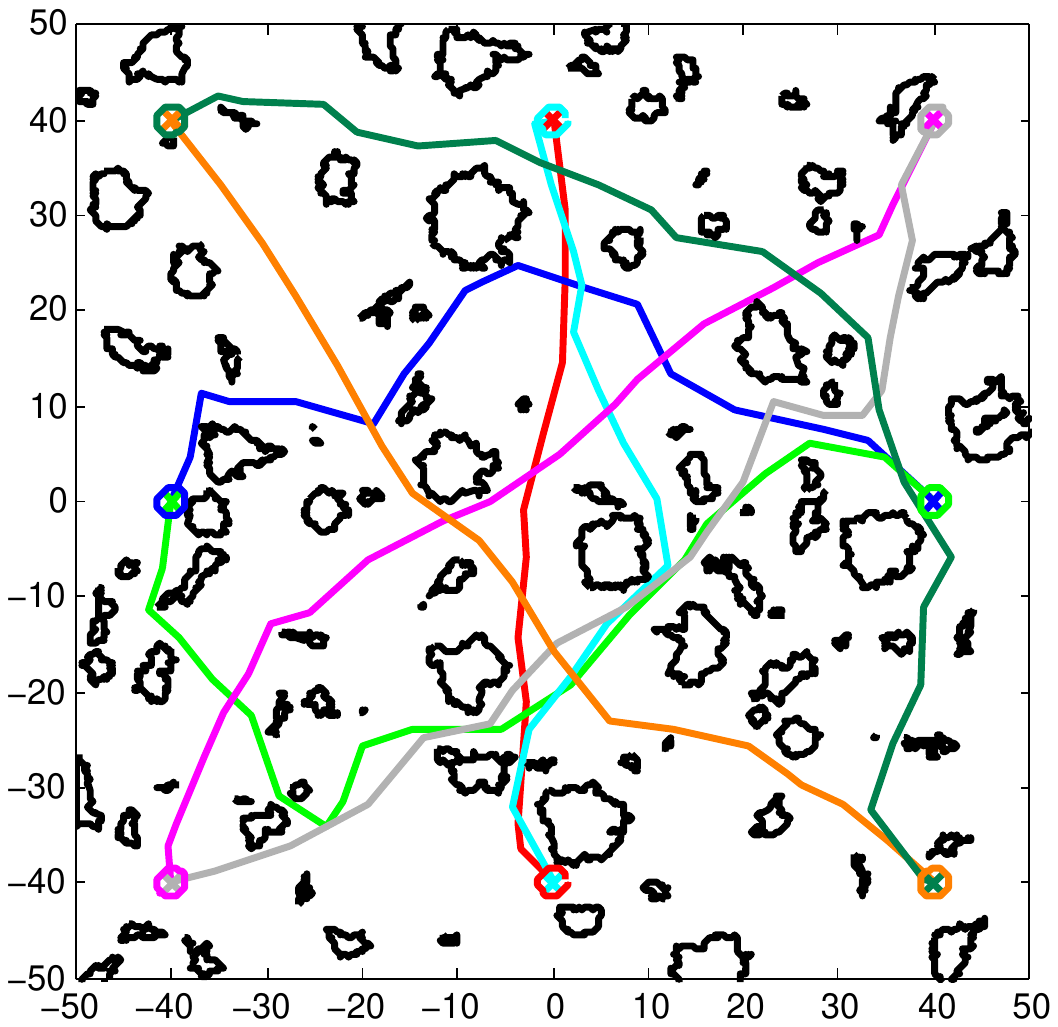}
 \hspace{0.2in}
 \includegraphics[width=0.3\linewidth, trim=155 230 155 230, clip=true]{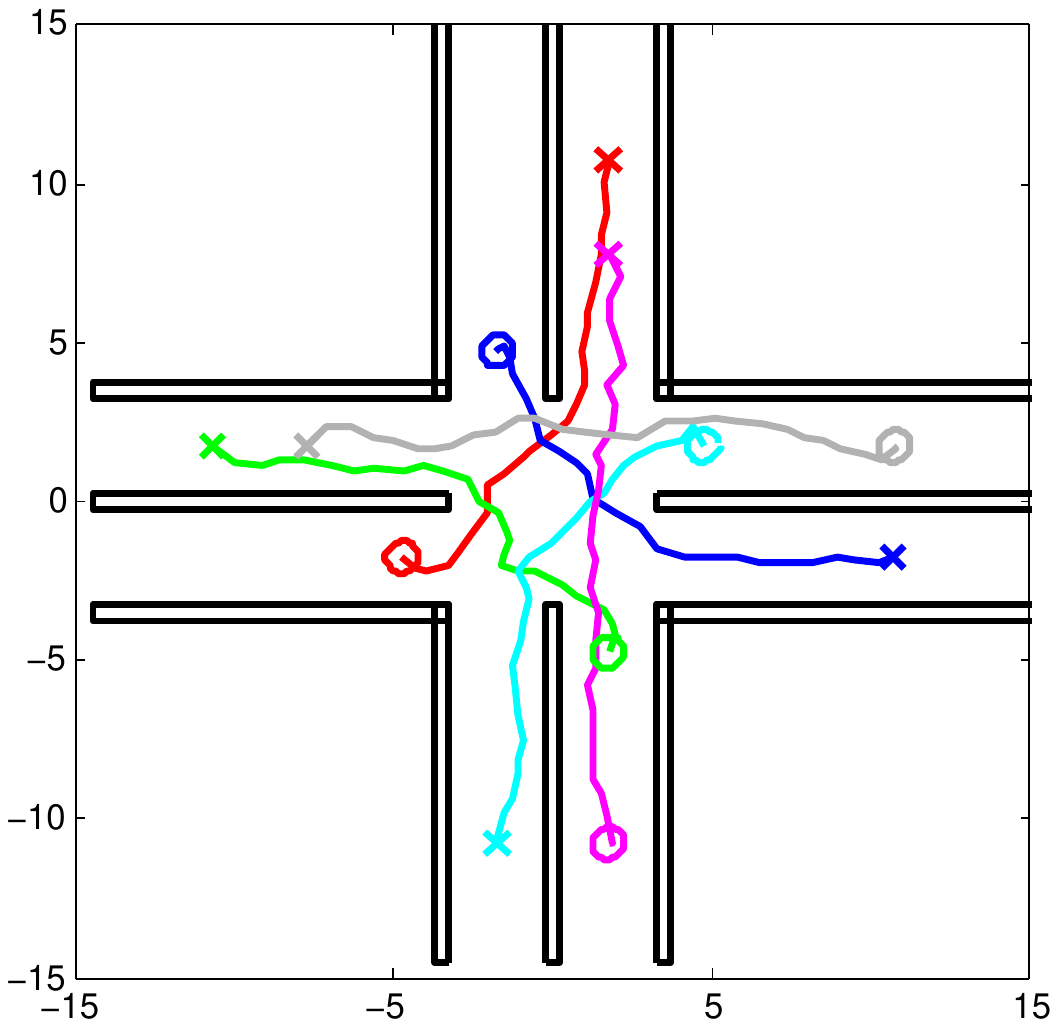}

   \caption[]{\small Experimental environments. Obstacles are outlined in black, paths are colored lines. Robots start at `O's and end at `X's ('O's are drawn 3x the robot radii to help visualization). {\bf Left/Center:} 8 robots in a randomly generated environment; the Nash Equilibrium in the left trial allows 6 of 8 robots to reach their goals, while all 8 reach their goals in the center trial. {\bf Right:} 6 Robots at an intersection and the paths corresponding to a Nash equilibrium where all robots reach their goals}

\label{fig:exp1}
\end{figure*}

%
%
%
%

%
%
%

\newlength{\mycolw}
\setlength{\mycolw}{.69cm}

\begin{table*}[!t]

  \caption[Results]{Experimental Results, Random Environment}
  \label{table:A}

\begin{minipage}[b]{0.67\linewidth}

  \renewcommand{\arraystretch}{1.2}
  \begin{tabular}{ | l | x{\mycolw} | x{\mycolw} | x{\mycolw} | x{\mycolw} | x{\mycolw} | x{\mycolw} | x{\mycolw} | x{\mycolw} |}
     \multicolumn{9}{l}{\hspace{2.7cm} Mean path length over 20 trials (ratio vs.\ socially optimal length)}\tabularnewline
  \hline
     \multirow{2}{*}{Algorithm} & \multicolumn{8}{|c|}{ Robot ID}\tabularnewline
  \cline{2-9}
      & 1 & 2 & 3 & 4 & 5 & 6 & 7 & 8\tabularnewline
  \cline{1-9}
      iNash (any-time)      & $1.295$ & $1.343$ & $1.324$ & $1.301$ & $1.166$ & $1.293$ & $1.224$ & $1.202$\tabularnewline
  \cline{1-9}
      Prioritized & $1.084$ & $1.149$ & $1.263$ & $1.316$ & $1.228$ & $1.326$ & $1.312$ & $1.349$\tabularnewline
  \cline{1-9}
      Prioritized (any-time)  & $1.081$ & $1.129$ & $1.153$ & $1.200$ & $1.126$ & $1.168$ & $1.163$ & $1.204$\tabularnewline
  \hline
  \end{tabular}

\end{minipage}
\begin{minipage}[b]{0.3\linewidth}

  \renewcommand{\arraystretch}{1.2}
  \begin{tabular}{ | c | c | c | c | c | c | c | c |}
     \multicolumn{8}{l}{Total times reached goal (of 20)}\\
  \hline
      \multicolumn{8}{|c|}{ Robot ID}\\
  \cline{1-8}
       1 & 2 & 3 & 4 & 5 & 6 & 7 & 8\\
  \cline{1-8}
      20 & 19 & 18 & 20 & 16 & 15 & 18 & 17\\
  \cline{1-8}
      20 & 20 & 18 & 19 & 20 & 18 & 18 & 15\\
  \cline{1-8}
      20 & 20 & 18 & 20 & 19 & 20 & 20 & 20\\
  \hline
  \end{tabular}

\end{minipage}

\end{table*}

\newlength{\mycolwd}
\setlength{\mycolwd}{.9cm}

\begin{table*}[!t]

  \caption[Results]{Experimental Results, Intersection Environment}
  \label{table:B}

\begin{minipage}[b]{0.67\linewidth}

  \renewcommand{\arraystretch}{1.2}
  \begin{tabular}{ | l | x{\mycolwd} | x{\mycolwd} | x{\mycolwd} | x{\mycolwd} | x{\mycolwd} | x{\mycolwd} |}
     \multicolumn{7}{l}{\hspace{2.7cm} Mean path length over 20 trials (ratio vs.\ socially optimal length)}\tabularnewline
  \hline
     \multirow{2}{*}{Algorithm} & \multicolumn{6}{|c|}{ Robot ID}\tabularnewline
  \cline{2-7}
      & 1 & 2 & 3 & 4 & 5 & 6\tabularnewline
  \cline{1-7}
      iNash (any-time) & $1.168$ & $1.228$ & $1.245$ & $1.243$ & $1.201$ & $1.166$\tabularnewline
  \cline{1-7}
      Prioritized & $1.107$ & $1.238$ & $1.384$ & $1.492$ & $1.339$ & $1.300$\tabularnewline
  \cline{1-7}
      Prioritized (any-time)  & $1.085$ & $1.212$ & $1.165$ & $1.248$ & $1.136$ & $1.251$\tabularnewline
  \hline
  \end{tabular}

\end{minipage}
\begin{minipage}[b]{0.3\linewidth}

  \renewcommand{\arraystretch}{1.2}
  \begin{tabular}{  | c | c | c | c | c | c |}
     \multicolumn{6}{l}{Total times reached goal (of 20)}\\
  \hline
     \multicolumn{6}{|c|}{ Robot ID}\\
  \cline{1-6}
       1 & 2 & 3 & 4 & 5 & 6\\
  \cline{1-6}
       11 & 14 & 12 & 14 & 16 & 15\\
  \cline{1-6}
       19 & 18 & 19 & 13 & 12 & 4\\
  \cline{1-6}
       19 & 18 & 20 & 19 & 20 & 15\\
  \hline
  \end{tabular}

\end{minipage}

\end{table*}

%
%
%
%
%
%

\emph{Discussion of Experimental Results:} Experimental results are summarized in tables \ref{table:A}-\ref{table:B}. In iNash all robots tend to bear the burden of conflict resolution similarly, on average. This contrasts with the prioritized methods, in which robot's with lower IDs have shorter paths and reach the goal more frequently than robots with higher IDs.
The result that some iNash paths are longer than the prioritized paths is expected, given that in iNash robots act in their own self interest. 

\section{Conclusions}

This paper discusses a class of multi-robot motion planning problems where each robot is associated with multiple-objectives and independent class specifications. We formulated the problem as an open-loop, non-cooperative differential game and proposed a distributed, anytime algorithm to compute the Nash equilibrium. Techniques from Rapidly-exploring Random Graphs and iterative better response were used to provide convergence guarantees and analyse the price of stability as well as the communication cost of the algorithm. We also presented results of simulation experiments that demonstrate the efficiency and anytime nature of the algorithm. Future directions include coupled task specifications of robots and algorithms which can eliminate the price of anarchy.

\bibliographystyle{plain}

\end{document}